%% NewCalibration.tex
%% V1.0
%% 2016/12/01
%% by De Mi

%%*************************************************************************
\documentclass[conference]{IEEEtran}
%\documentclass[journal,draft]{IEEEtran}
%\documentclass[journal,draftcls]{IEEEtran}
%\documentclass[journal,10pt,draftclsnofoot]{IEEEtran}
%\documentclass[journal,12pt,draftclsnofoot,onecolumn]{IEEEtran}
%
%
% If IEEEtran.cls has not been installed into the LaTeX system files,
% manually specify the path to it like:
% \documentclass[journal]{../sty/IEEEtran}

% *** CITATION PACKAGES ***
%
\usepackage{cite}
% cite.sty was written by Donald Arseneau
% V1.6 and later of IEEEtran pre-defines the format of the cite.sty package
% \cite{} output to follow that of IEEE. Loading the cite package will
% result in citation numbers being automatically sorted and properly
% "compressed/ranged". e.g., [1], [9], [2], [7], [5], [6] without using
% cite.sty will become [1], [2], [5]--[7], [9] using cite.sty. cite.sty's
% \cite will automatically add leading space, if needed. Use cite.sty's
% noadjust option (cite.sty V3.8 and later) if you want to turn this off
% such as if a citation ever needs to be enclosed in parenthesis.
% cite.sty is already installed on most LaTeX systems. Be sure and use
% version 5.0 (2009-03-20) and later if using hyperref.sty.
% The latest version can be obtained at:
% http://www.ctan.org/tex-archive/macros/latex/contrib/cite/
% The documentation is contained in the cite.sty file itself.

% *** GRAPHICS RELATED PACKAGES ***
%
\usepackage{graphicx}
\usepackage[cmex10]{amsmath}
\usepackage{amsfonts}
% A popular package from the American Mathematical Society that provides
% many useful and powerful commands for dealing with mathematics. If using
% it, be sure to load this package with the cmex10 option to ensure that
% only type 1 fonts will utilized at all point sizes. Without this option,
% it is possible that some math symbols, particularly those within
% footnotes, will be rendered in bitmap form which will result in a
% document that can not be IEEE Xplore compliant!
%
% Also, note that the amsmath package sets \interdisplaylinepenalty to 10000
% thus preventing page breaks from occurring within multiline equations. Use:
%\interdisplaylinepenalty=2500
% after loading amsmath to restore such page breaks as IEEEtran.cls normally
% does. amsmath.sty is already installed on most LaTeX systems. The latest
% version and documentation can be obtained at:
% http://www.ctan.org/tex-archive/macros/latex/required/amslatex/math/

% *** SPECIALIZED LIST PACKAGES ***
%
\usepackage{algorithmic}
% algorithmic.sty was written by Peter Williams and Rogerio Brito.
% This package provides an algorithmic environment fo describing algorithms.
% You can use the algorithmic environment in-text or within a figure
% environment to provide for a floating algorithm. Do NOT use the algorithm
% floating environment provided by algorithm.sty (by the same authors) or
% algorithm2e.sty (by Christophe Fiorio) as IEEE does not use dedicated
% algorithm float types and packages that provide these will not provide
% correct IEEE style captions. The latest version and documentation of
% algorithmic.sty can be obtained at:
% http://www.ctan.org/tex-archive/macros/latex/contrib/algorithms/
% There is also a support site at:
% http://algorithms.berlios.de/index.html
% Also of interest may be the (relatively newer and more customizable)
% algorithmicx.sty package by Szasz Janos:
% http://www.ctan.org/tex-archive/macros/latex/contrib/algorithmicx/

% *** ALIGNMENT PACKAGES ***
%
\usepackage{array}
\usepackage{amsthm}
\usepackage{amssymb}
\usepackage{subfigure}
\usepackage{mathtools}
\usepackage{epstopdf}
\usepackage{color}
\newtheorem{theorem}{Theorem}
\newtheorem{lemma}[theorem]{Lemma}

\newtheorem{proposition}{Proposition}

\usepackage{enumitem}

% correct bad hyphenation here
\hyphenation{op-tical net-works semi-conduc-tor Kronecker}

\begin{document}
\renewcommand{\thefootnote}{}	
%
% paper title
% can use linebreaks \\ within to get better formatting as desired
% Do not put math or special symbols in the title.
%\title{Self-Calibration for Massive MIMO Systems with Imperfect Channel Estimation}
\title{Self-Calibration for Massive MIMO with Channel Reciprocity and Channel Estimation Errors}
%
%
% author names and IEEE memberships
% note positions of commas and nonbreaking spaces ( ~ ) LaTeX will not break
% a structure at a ~ so this keeps an author's name from being broken across
% two lines.
% use \thanks{} to gain access to the first footnote area
% a separate \thanks must be used for each paragraph as LaTeX2e's \thanks
% was not built to handle multiple paragraphs
%

%\author{Authors
%\thanks{Affiliation.}
%}
\author{\IEEEauthorblockN{De~Mi\IEEEauthorrefmark{1}, Lei~Zhang\IEEEauthorrefmark{2}, Mehrdad~Dianati \IEEEauthorrefmark{3}\IEEEauthorrefmark{1}, Sami~Muhaidat\IEEEauthorrefmark{4}\IEEEauthorrefmark{1},  Pei~Xiao\IEEEauthorrefmark{1} and Rahim~Tafazolli\IEEEauthorrefmark{1}}
	\IEEEauthorblockA{\IEEEauthorrefmark{1}Institute for Communication Systems, University of Surrey, United Kingdom}
	\IEEEauthorblockA{\IEEEauthorrefmark{2}School of Engineering, University of Glasgow, United Kingdom}
	\IEEEauthorblockA{\IEEEauthorrefmark{3}Warwick Manufacturing Group, University of Warwick, United Kingdom}
	\IEEEauthorblockA{\IEEEauthorrefmark{4}Department of Electrical and Computer Engineering, Khalifa University, United Arab Emirates}
	Email:d.mi@surrey.ac.uk, lei.zhang@glasgow.ac.uk, m.dianati@warwick.ac.uk, \\ muhaidat@ieee.org, p.xiao@surrey.ac.uk, r.tafazolli@surrey.ac.uk}

\maketitle

% As a general rule, do not put math, special symbols or citations
% in the abstract or keywords.
\begin{abstract}
	In time-division-duplexing (TDD) massive multiple-input multiple-output (MIMO) systems, channel reciprocity is exploited to overcome the overwhelming pilot training and the feedback overhead. However, in practical scenarios, the imperfections in channel reciprocity, mainly caused by radio-frequency mismatches among the antennas at the base station side, can significantly degrade the system performance and might become a performance limiting factor. In order to compensate for these imperfections, we present and investigate two new calibration schemes for TDD-based massive multi-user MIMO systems, namely, relative calibration and inverse calibration. In particular, the design of the proposed inverse calibration takes into account a compound effect of channel reciprocity error and channel estimation error. We further derive closed-form expressions for the ergodic sum rate, assuming maximum ratio transmissions with the compound effect of both errors. We demonstrate that the inverse calibration scheme outperforms the traditional relative calibration scheme. The proposed analytical results are also verified by simulated illustrations. %The proposed analytical results are also verified via Monte-Carlo simulations.
\end{abstract}

% Note that keywords are not normally used for peerreview papers.
%\begin{IEEEkeywords}
%Massive MU-MIMO, channel reciprocity error, calibration, channel estimation error, ergodic sum rate, MSE.
%\end{IEEEkeywords}

{\scriptsize \footnote{We would like to acknowledge the support of the University of Surrey 5GIC (www.surrey.ac.uk/5gic) members for this work. This work was also supported in part by the European Commission under the 5GPPP project 5G-Xcast (H2020-ICT-2016-2 call, grant number 761498). The views expressed in this contribution are those of the authors and do not necessarily represent the project.}}

% For peer review papers, you can put extra information on the cover
% page as needed:
% \ifCLASSOPTIONpeerreview
% \begin{center} \bfseries EDICS Category: 3-BBND \end{center}
% \fi
%
% For peerreview papers, this IEEEtran command inserts a page break and
% creates the second title. It will be ignored for other modes.
\IEEEpeerreviewmaketitle

\section{Introduction} \label{sec:intro}
Massive MIMO (multiple-input-multiple-output) is identified as a promising technological paradigm which has been proposed in order to meet some of 5G requirements, including data rates of 10-20 Gbps and a latency of less than 1 msec \cite{MMNG,MWC2015,3gpp22261f30,GZWCM2018}. Exploiting channel reciprocity, time-division-duplexing (TDD) operation enables the channel state information (CSI) acquisition in massive MIMO with an affordable overhead that is independent of the number of base station (BS) antennas \cite{T1}. Most prior studies assume the perfect channel reciprocity by constraining the time delay from the uplink (UL) to the downlink (DL) is within the coherence time of the channel \cite{HYTM2013JSAC, MMIMOULDL}. However, the assumption of the perfect reciprocity is unrealistic in practical systems even within the coherence time, due to the fact that radio-frequency (RF) transceivers introduce amplitude and phase mismatches between the UL and the DL \cite{BjornsonTIT2014}. The imperfect channel reciprocity contaminates the estimate of the effective channel response. This causes a significant degradation in the performance of linear precoding schemes, due to their sensitivity to the CSI accuracy. Our prior work \cite{ReciprocityErrorTCOM2016} has thoroughly investigated such performance degradation for two typical linear precoders, i.e., maximum radio transmission (MRT) and zero-forcing (ZF), with considerations of the imperfect channel estimation. The results in \cite{ReciprocityErrorTCOM2016} show that both MRT and ZF are severely affected by the compound effect of the reciprocity and estimation errors. Therefore, it is of great interest to study suitable reciprocity calibration schemes in the presence of both errors.%It is therefore required to perform a suitable calibration to compensate for the imperfection of channel reciprocity.

In principle, calibration schemes for a precoded TDD MIMO system contain the estimation of transmit (Tx) and receive (Rx) RF frontends' responses or equivalently calibration coefficients and the design of the calibration matrix to calibrate the precoders. For the calibration coefficients estimation, we explicitly focus on the so-called self-calibration in massive MIMO systems since it can be implemented at the BS side only and without exchanging calibration pilots between the BS and user terminals (UTs) \cite{ChnReConf2RE}. Prior studies proposed different methods to realise self-calibration in massive MIMO systems \cite{MutualCouplingTWC2016,ArgosConfShort}. In \cite{MutualCouplingTWC2016}, the use of antenna coupling at the BS has been proposed to measure the calibration coefficients. This method is very sensitive to the scatterings near the BS antennas. A practical study in \cite{ArgosConfShort} presented a method where an additional RF transceiver is used as a reference to exchange calibration pilots with other BS antennas' transceivers. However, the method in \cite{ArgosConfShort} is sensitive to the placement of the reference transceiver. In order to obtain reliable estimates of the calibration coefficients, additional calibration circuits can be applied at the BS, as one example shown in \cite{RefNewCali2003}. The self-calibration scheme in \cite{RefNewCali2003,ArgosConfShort} is known as relative calibration, which has been widely considered in the context of the massive MIMO system \cite{ScalSynReTWC2014,WenceTCOMM15}. %However, there have been no calibration schemes, including the widely-used relative calibration, taking into account the compound error effect.

Although a number of recent progress has been developed from the perspective of the reciprocity calibration, various issues still remain open, such that, to the best knowledge of the authors, there have been no calibration schemes taking into account the compound effect of the reciprocity and channel estimation errors. In order to fill these research gaps among the existing literature, we consider a low-cost calibration circuit that presented in \cite{RefNewCali2003}, and expand it into the TDD massive MU-MIMO system. Such design enables the BS to estimate the RF responses reliably, also to perform the relative calibration or what we call ``inverse calibration" (in the sense that it is based on the inverse of the calibration coefficients). More importantly, with considerations of the compound effect of the reciprocity and estimation errors, we provide an in-depth analysis of the performance of a precoded massive MIMO system with the aforementioned two calibration algorithms. The proposed analytical results are verified via Monte-Carlo simulations. The rest of the paper is organised as follows. In Section~\ref{sec:sys}, we describe the TDD massive MIMO system model with imperfect channel estimation and the reciprocity error model due to the RF mismatch. In Section~\ref{sec:new}, the description of the relative and inverse calibration algorithms comes after a discussion of the considered calibration circuit. The performance evaluation of these two algorithms is given in Section~\ref{sec:performance}. Simulation results and conclusions are provided in Section~\ref{sec:re} and Section~\ref{sec:con}, respectively. Some of the detailed derivations are given in the appendices.

\section{System Model}\label{sec:sys}
% 1. Uplink Channel Estimation: we can use previous single-parameter model, or the pilot model in YangHong and ConfTM.
% 2. Reciprocity Model in our reciprocity paper.
% 3. Downlink with Calibration matrix. (Later can be extended to the case with residual calibration error.)
A TDD massive MU-MIMO system is considered in this paper as illustrated in Fig.~\ref{fig:fig1}. It consists of $M$ antennas at the BS, each antenna is connected with an individual RF chain. %We ignore the impact of the antenna coupling. 
In addition, $K$ single-antenna UTs ($M \gg K$) are served in the same time and frequency resources. We assume that the time delay from the UL channel estimation to the DL transmission is less than the coherence time of the channel, ensuring that the propagation channels on the UL and DL are equal. %In the rest of this section, we first describe the UL channel estimation with considerations of the reciprocity and estimation errors, and then introduce the DL transmission by taking into account the generalised calibration matrix. %We assume that the time delay from the UL channel estimation to the DL transmission is less than the coherence time of the channel so that the propagation channels on the UL and DL are equal.

\begin{figure}[!t]
	\centering
	\includegraphics[angle=0,width=0.45\textwidth]{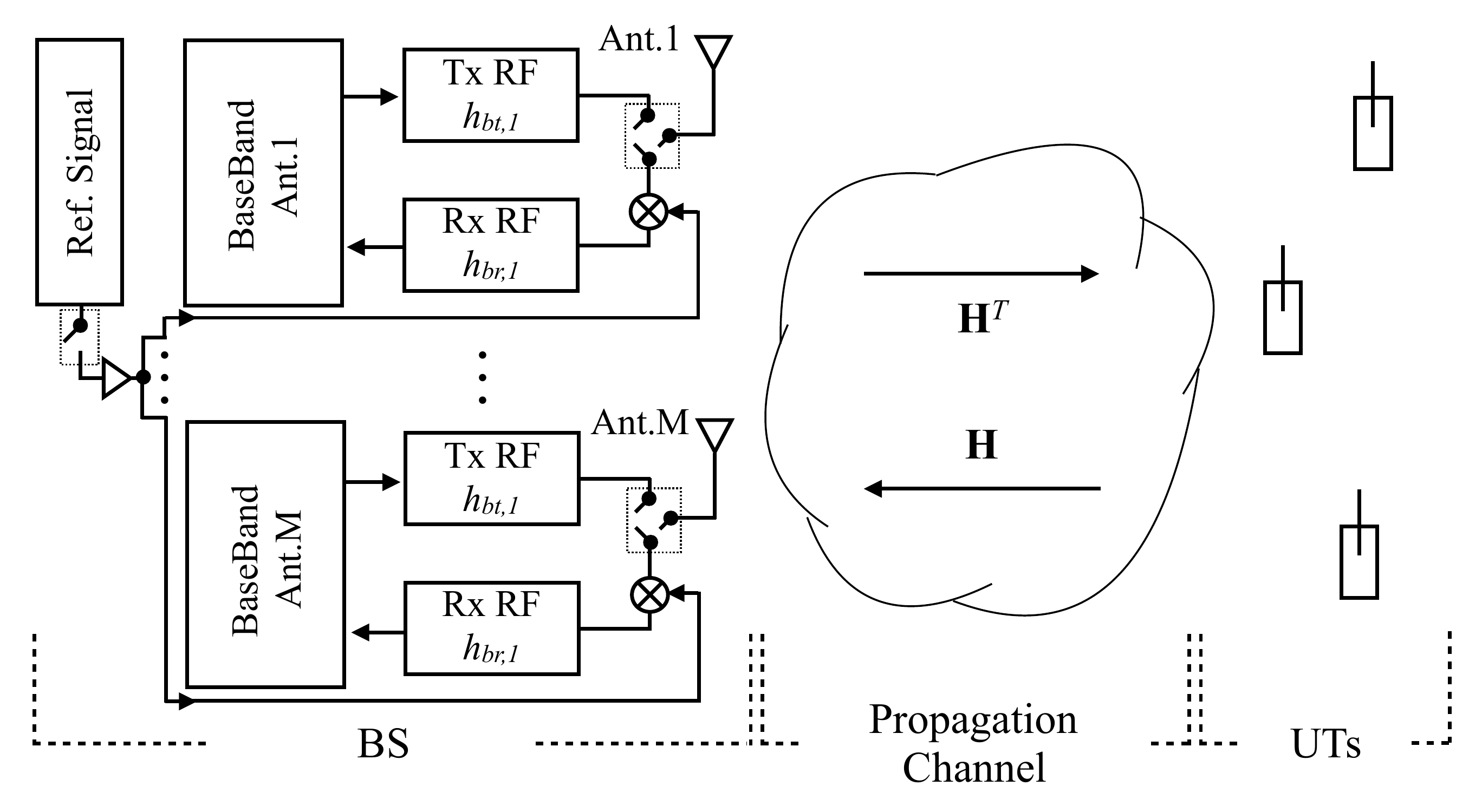} %width=3.5in
	% where an .eps filename suffix will be assumed under latex, 
	% and a .pdf suffix will be assumed for pdflatex; or what has been declared
	% via \DeclareGraphicsExtensions.
	\caption{A TDD massive MU-MIMO System with calibration circuits.}
	\label{fig:fig1}
\end{figure}

As shown in Fig.~\ref{fig:fig1}, we denote the UL and DL propagation channels by $ \mathbf{H} \in \mathbb{C}^{M \times K} $ and $ \mathbf{H}^{T} $ respectively, whose entries follow independent identically distributed (i.i.d.) $C\mathcal{N}(0, 1)$. We consider the same model of the BS RF frontends response as that set up in \cite{ReciprocityErrorTCOM2016}. Briefly speaking, $ M \times M $ diagonal matrices $ \mathbf{H}_{br} $ and $ \mathbf{H}_{bt} $ represent the effective response matrices of the Rx and Tx RF frontends at the BS, whose $ i $-th diagonal entries are $ h_{br,i} = A_{br,i}\textnormal{exp}(j\varphi_{br,i}) $ and $ h_{bt,i} = A_{bt,i}\textnormal{exp}(j\varphi_{bt,i}) $, respectively, where $A$ denotes amplitude, $\varphi$ denotes phase. We model both amplitude and phase reciprocity errors as independent truncated Gaussian random variables \cite{ReciprocityErrorTCOM2016}. Without loss of generality, the statistical magnitudes of these truncated Gaussian distributed variables are assumed to be static within the considered coherence time of the channel or even longer period, e.g., minutes \cite{ArgosConfShort}.

We then consider the UL training protocol based on the minimum mean-square error (MMSE) channel estimation as in \cite{HYTM2013JSAC}, by taking into account the effect of $\mathbf{H}_{br}$ and $\mathbf{H}_{bt}$. More specifically, in TDD massive MIMO systems, UTs first transmit the orthogonal UL pilots of length $\tau_u$ to BS, where $\tau_u \geq K$. Therefore, the MMSE estimate of the actual UL channel response $ \mathbf{H}_{u} $ can be given by \cite{HYTM2013JSAC}
\begin{equation} \label{eq:HuE}
\hat{\mathbf{H}}_{u} = a\mathbf{H}_{br}\mathbf{H} + b\mathbf{N}_{u},
\end{equation}
where the estimation-error-related parameters are given by
\begin{equation} \label{eq:esterr}
a = \dfrac{\tau_u\rho_u}{\tau_u\rho_u+1}, \: b = \dfrac{\sqrt{\tau_u\rho_u}}{\tau_u\rho_u+1}.
\end{equation}
In addition, the $ M \times K $ noise matrix $ \mathbf{N}_{u} $ is the channel estimation noise matrix with i.i.d. $ C\mathcal{N}(0, 1) $ elements and is independent of $\hat{\mathbf{H}}_{d}$, and $\rho_u$ denotes the expected UL transmit SNR. Then the BS uses the transpose of $\hat{\mathbf{H}}_{u}$ as the estimate of the DL channel $\hat{\mathbf{H}}_{d}$, i.e., $\hat{\mathbf{H}}_{d} = \hat{\mathbf{H}}_{u}^{T}$, whereas the actual effective DL channel is $\mathbf{H}_{d} = \mathbf{H}^{T}\mathbf{H}_{bt}$. By comparing $\hat{\mathbf{H}}_{d}$ and $\mathbf{H}_{d}$, we can rewrite the DL channel estimate $\hat{\mathbf{H}}_{d}$ as 
\begin{equation} \label{eq:HdECompare}
\hat{\mathbf{H}}_{d} = a\mathbf{H}_{d}\mathbf{E} + b\mathbf{N}_{u}^{T},
\end{equation} 
where $\mathbf{E} = \mathbf{H}_{bt}^{-1}\mathbf{H}_{br}$ denotes the channel reciprocity error. We can see from \eqref{eq:HdECompare} a compound effect of an \textit{additive} distortion, $\mathbf{N}_{u}$, caused by the imperfect channel estimation, and a \textit{multiplicative} distortion, $\mathbf{E}$, caused by the imperfect channel reciprocity. 
Let a $K \times 1$ vector $ \mathbf{s} = [s_1, \cdots s_k, \cdots, s_K]^{T} $ denote the symbol to be transmitted to $ K $ UTs, where the normalised symbol power per user is assumed, i.e., $\mathbb{E}\left\lbrace \lvert s_k \rvert^2 \right\rbrace = 1$, for $ k = 1,2,\cdots,K $. We also assume that the symbols of different users are independent. The BS applies an $M \times K$ linear precoding matrix $\mathbf{W}$ to map the symbol vector $\mathbf{s}$ into an $M \times 1$ transmit signal vector to the BS antennas. We use $\mathbf{x}$ to denote this transmit signal vector, which is given by
\begin{equation} \label{eq:x}
\mathbf{x} = \sqrt{\rho_{d}}\lambda\mathbf{W}\mathbf{s},
\end{equation}
where $ \rho_{d} $ denotes the average transmit power at the BS (note that the expression \eqref{eq:x} implies that the power is equally allocated to each UT in this work), and $\lambda$ is a normalisation parameter to satisfy the transmission power constraint at the BS such that
\begin{equation}\label{eq:powercon}
\mathbb{E}\left\lbrace \lVert \mathbf{x} \rVert^{2} \right\rbrace = \mathbb{E}\left\lbrace \lVert \sqrt{\rho_{d}}\lambda\mathbf{W}\mathbf{s} \rVert^{2} \right\rbrace = \rho_{d}.
\end{equation}
Hence, $\lambda$ can be calculated as follows:
\begin{equation} \label{eq:lamda}
\lambda = \sqrt{\dfrac{1}{\mathbb{E}\left\lbrace \textnormal{tr}\left( \mathbf{W}\mathbf{W}^{H} \right)  \right\rbrace }}.
\end{equation}
 
Based on \eqref{eq:x}, the received signals of all $K$ UTs can be expressed in a vector form as 
%the collective received signal vector $\mathbf{y}$ at $K$ UTs is given by
\begin{equation} \label{eq:yd}
\mathbf{y} = \mathbf{H}_{d}\mathbf{x} + \mathbf{n} = \sqrt{\rho_{d}}\lambda\mathbf{H}^{T}\mathbf{H}_{bt}\mathbf{W}\mathbf{s} + \mathbf{n},
\end{equation}
where the $K \times 1$ vector $ \mathbf{n} $ denotes the DL received noise for all $K$ UTs, whose $ k^{\text{th}} $ element $ n_{k} \sim C\mathcal{N}(0, \sigma_k^2) $. We assume that $\sigma_k^2 = 1, \forall k$. Therefore, $ \rho_{d} $ can also be treated as the average input SNR for the DL transmission. For the $k^{\text{th}}$ UT, we have
%\begin{equation} \label{eq:yk}
% %y_{k}  \! = \! \underbrace{\sqrt{\rho_{d}}\lambda\mathbf{h}_{k}^{T}\mathbf{H}_{bt}\mathbf{w}_{k} s_{k}}_\text{Desired Signal} + \! \underbrace{\sqrt{\rho_{d}}\lambda \! \! \! \!\sum_{i = 1, i \neq k}^{K}\! \! \! \!\mathbf{h}_{k}^{T}\mathbf{H}_{bt}\mathbf{w}_{i} s_{i}}_\text{Inter-user Interference} +\!\! \underbrace{n_{k}}_\text{Noise}.
%\end{equation} 
\begin{equation} \label{eq:yk}
y_{k}  =  \sqrt{\rho_{d}}\lambda\mathbf{h}_{k}^{T}\mathbf{H}_{bt}\mathbf{w}_{k}s_{k} + \!\! \sum_{i = 1, i \neq k}^{K} \!\!\sqrt{\rho_{d}}\lambda\mathbf{h}_{k}^{T}\mathbf{H}_{bt}\mathbf{w}_{i} s_{i} + n_{k},
\end{equation} 
where the $M \times 1$ vectors $\mathbf{h}_{k}$ and $\mathbf{w}_{k}$ are the $k^{\text{th}}$ column of $\mathbf{H}$ and $\mathbf{W}$ respectively. Note that the received signal $ y_{k} $ in \eqref{eq:yk} is decomposed into three terms. The first two terms accounts for the desired signal for the $ k^{\text{th}} $ UT and the inter-user interference from other $ K-1 $ UTs, respectively, while the last term represents the receiver noise. Since $\mathbf{W}$ is a function of the DL channel estimate instead of the actual DL channel response, the performance of linear precoding schemes is affected by the imperfect channel reciprocity. To address this issue, we introduce the calibration scheme in the following section.

\section{Self-Calibration} \label{sec:new}
To compensate for the imperfection of channel reciprocity, an $M \times M$ pre-precoding calibration matrix $\mathbf{B}$ can be introduced to compensate for the non-reciprocity \cite{WenceTCOMM15}, such that $\hat{\mathbf{H}}_{d}$ in \eqref{eq:HdECompare} becomes
\begin{equation} \label{eq:HdC}
\hat{\mathbf{H}}_{d,CL} = a\mathbf{H}_{d}\mathbf{E}\mathbf{B} + b\mathbf{N}_{u}^{T}\mathbf{B},
\end{equation}
where $\hat{\mathbf{H}}_{d,CL}$ represents the estimate of the DL channel response after applying calibration, which could be used to calculate the DL precoding matrix $\mathbf{W}$. The majority of the reported results on the reciprocity calibration has proposed the design concept of the calibration matrix $\mathbf{B}$ without considering the effect of channel estimation error \cite{ArgosConfShort,ScalSynReTWC2014,WenceTCOMM15}, e.g., in the case with $a\approx1$ and $b\approx0$. In such case, the minimum requirement to calibrate the BS antennas is that $\mathbf{EB} = c\mathbf{I}_{M} $, where the scalar $c \in \mathbb{C}_{\neq0}$ is multiplied by all calibration factors. The scalar $c$ is arbitrary. $\mathbb{C}_{\neq0}$ denotes the set of non-zero complex numbers. Thus, this does not change the direction of the precoding beamformer \cite{schenk2008rf}. 
%However, in practice, the value of $b$ can be significant, e.g., $b \approx a$, due to the imperfections on channel estimation. As a result, the current results based on the aforementioned minimum requirement, e.g., $\mathbf{B} = \mathbf{E}^{-1} = \mathbf{H}_{bt}\mathbf{H}_{br}^{-1}$ in \cite{WenceTCOMM15}, may not able to be held in the presence of the compound effect of reciprocity and channel estimation errors. More specifically, it can be seen from \eqref{eq:HdC} that the calibration matrix $\mathbf{B}$ may amplify the power of the estimation noise, or equivalently the channel estimation error, which can even outweigh the benefit of calibration in the case with a significant level of channel estimation error, such as in the low region of $\rho_u$. This has motivated us to design a calibration matrix by taking into account the compound effect of both errors.

%We name this effect ``estimation error amplification", which can even outweigh the benefit of calibration in the case with a significant level of channel estimation error, such as in the low region of $\rho_u$. This has motivated us to design a calibration matrix by taking into account the compound effect of both errors, as discussed in the following section.
The acquisition of the calibration matrix $\mathbf{B}$ contains two steps: 1) the estimation of $\mathbf{H}_{bt}$, $\mathbf{H}_{br}$, and 2) finding $\mathbf{B}$ based on the estimates of $\mathbf{H}_{bt}$ and $\mathbf{H}_{br}$. In the following, we first present a calibration circuit used for the first step. We then discuss two calibration algorithms obtained based on the considered circuit design, and their relationship with the channel estimation error.

\subsection{RF Frontend Response Measurement}
%The measurement of the effective response matrix of the RF frontend at the BS side is a time consuming process in the massive MIMO systems.
Motivated by the calibration method for conventional MIMO systems in \cite{RefNewCali2003}, we present a circuit design as shown in Fig.~\ref{fig:fig1}, to achieve the measurement of the effective response matrix of the BS RF frontend for the massive MIMO system. Particularly, switching units attached to each antenna have three modes: ``Tx/Rx" mode (the antenna connects to Tx or Rx RF frontend), ``Link" mode (Tx and Rx RF frontends are connected) and ``Null" mode (no connection); A reference signal source is split and equally injected at each Rx Rf frontend by couplers. Then the measurement of $\mathbf{H}_{bt}$ and $\mathbf{H}_{br}$ can be carried out, which contains two steps which we call ``self connection" and ``half connection". 

During the self connection, all switching units that attached to the BS antennas are set to be in ``Link" mode. The reference signal source is disconnected. The individual baseband of each BS antenna, taking $i$-th antenna element as an example, estimates the product of $h_{bt,i}$ and $h_{br,i}$ by sending a known signal $p_{i}$ simultaneously, such that $
r_{i} = h_{br,i}h_{bt,i}p_{i} + u_{i} $
received at the baseband of the $i$-th antenna, where the thermal noise $u_{i}$ has a negligible value due to the fact that the calibration SNR is usually sufficiently high, e.g., 20 dB in \cite{RefNewCali2003}. Thus the estimate of $\mathbf{H}_{bt}\mathbf{H}_{br}$, denoted by $\mathbf{R}_{\text{self}}$, is given by 
\begin{equation} \label{eq:Rs}
\mathbf{R}_{\text{self}} = \text{diag}\left( r_1/p_1, \cdots, r_i/p_i, \cdots, r_K/p_K \right). 
\end{equation}

During the half connection, all switching units that attached to the BS antennas are set to be in ``Null" mode. The reference signal source is equally injected at all Rx RF frontends. Let an $M \times 1$ vector $\mathbf{p}_{\text{ref}}$ be the reference signal vector with duplicate entries. Then the collective received signal vector at all the BS antenna baseband is given by $
\mathbf{r}_{h} = \mathbf{H}_{br}(\mathbf{p}_{\text{ref}} + \mathbf{u}_{h})$ ,
where $\mathbf{r}_{h}$ contains the received signals at each baseband that are sampled at the same time \cite{RefNewCali2003}. Again, the effect of the measurement noise $\mathbf{u}_{h}$ is assumed to be trivial in this work due to the assumption on the high calibration SNR. Hence, the estimate of $\mathbf{H}_{br}$, denoted by $\mathbf{R}_{\text{half}}$, is given by
\begin{equation} \label{eq:Rh}
\mathbf{R}_{\text{half}} = \text{diag}(\mathbf{r}_{h})\left( \text{diag}(\mathbf{p}_{\text{ref}})\right)^{-1}.
\end{equation}

Note that, as discussed in Section~\ref{sec:sys}, the reciprocity-error-related parameters, or equivalently $\mathbf{H}_{bt}$ and $\mathbf{H}_{br}$, are relatively static, i.e., they change in a much slower rate compared to the variations of the channel state. Hence, once the measurement of the BS RF responses, i.e., $\mathbf{R}_{\text{self}}$ and $\mathbf{R}_{\text{half}}$, is reliably obtained, it can be applied within the the coherence time of the channel or even longer period \cite{ArgosConfShort}. %As a result, the above-mentioned measurement approach can be easily deployed in both off-line and on-line calibration schemes, with a relatively low implementation complexity. However, we would like to note that further investigations, including the effect of the non-trivial measurement noise and a practical hardware (testbed) test, may be required to provide a reliable performance analysis of the proposed measurement approach.  

\subsection{Design of the Calibration Matrix}
Based on the measurement in \eqref{eq:Rs} and \eqref{eq:Rh}, the calibration matrix $\mathbf{B}$ can be calculated. The study in \cite{RefNewCali2003} considers the relative calibration scheme, where the calibration matrix, denoted by $\mathbf{B}_{\text{RC}}$, is given by $
\mathbf{B}_{\text{RC}} = \mathbf{R}_{\text{self}}\left(\mathbf{R}_{\text{half}}^{2}\right)^{-1}. $
As discussed at the beginning of this section, the widely-used relative calibration ignores the effect of the imperfect channel estimation, which can result in the estimation error amplification, and additionally cause the enhancement of the inter-user interference. More specifically, it can be seen from \eqref{eq:HdC} that the use of the calibration matrix $\mathbf{B}_{\text{RC}}$ may amplify the power of the estimation noise (equivalently, channel estimation error), which can even outweigh the benefit of calibration in certain cases, such as in the low region of $\rho_u$. This motivates us to design a calibration matrix without amplifying the estimation error.

In this work, we present a calibration scheme to compensate for the effect of the channel reciprocity error, as well as to reduce the noise power of the UL channel estimation, or equivalently reduce the estimation noise variance. To this end, we consider a calibration matrix, denoted by $\mathbf{B}_{\text{IC}}$, which is given by
\begin{equation}
\mathbf{B}_{\text{IC}} = \mathbf{R}_{\text{half}}^{\ast}\left( \mathbf{R}_{\text{half}}\mathbf{R}_{\text{self}}^{\ast}\right)^{-1}.
\end{equation}
In the ideal scenario, e.g., in the sufficiently high calibration SNR regime, the calibration matrix $\mathbf{B}_{\text{IC}}$ is equivalent to the inverse of the product of $\mathbf{H}_{bt}^{\ast}$ and $\mathbf{H}_{br}$. Thus, we name this calibration scheme as ``Inverse Calibration". Such a scheme can ensure that noise power of UL channel estimation after calibration is equal to or even less than that before calibration, since the expected value of the product of $\mathbf{H}_{bt}^{\ast}$ and $\mathbf{H}_{br}$ is greater than or equal to one \cite{ReciprocityErrorTCOM2016}. We shall evaluate the performance of the inverse calibration and the traditional relative calibration in the following section.
%In the high calibration SNR regime, the calibration matrix $\mathbf{B}_{\text{IC}}$ becomes the inverse of the product of $\mathbf{H}_{bt}^{\ast}$ and $\mathbf{H}_{br}$. Thus, we name this calibration scheme as ``Inverse Calibration". Such a scheme can ensure that noise power of UL channel estimation after calibration is equal to or even less than that before calibration, since the expected value of the product of $\mathbf{H}_{bt}^{\ast}$ and $\mathbf{H}_{br}$ is greater than or equal to one \cite{ReciprocityErrorTCOM2016}. We shall evaluate the performance of the traditional relative calibration and the inverse calibration for different precoding schemes in the following section. %Note that our noise-free model of $\mathbf{R}_{\text{SC}}$ and $\mathbf{R}_{\text{HC}}$ is informative for the comparison purpose, in the sense that both relative and inverse calibration matrices are calculated based on the measurement in \eqref{eq:Rs} and \eqref{eq:Rh}.

\section{Performance Evaluation} \label{sec:performance}
In this section, we analyse the ergodic sum rate to evaluate the performance of the inverse calibration and the widely-used relative calibration. %In this section, we provide a comprehensive performance evaluation of the inverse calibration and the widely-used relative calibration. We consider two types of performance metrics, i.e., \textit{ergodic sum rate} and \textit{mean-square error}, to analyse the pre- and post-equalisation performance, respectively. %for two typical precoding techniques, MRT and ZF, with considerations of the compound effect of the multiplicative reciprocity error and the additive estimation error.
We consider the simplest precoder, i.e., maximum ratio transmission. Note that our theoretical analysis contends with the compound effects on the system performance of the additive channel estimation error and multiplicative channel reciprocity error.
%with the considerations of the compound effect of the multiplicative reciprocity error and the additive estimation error.
 
%\subsection{Ergodic Sum Rate}
Recall that in \eqref{eq:yk}, we denote the desired signal power of the $k^{\text{th}}$ UT by $P_s$, and the its inter-user interference by $P_I$, where $P_s$ and $P_I$ are given by
\begin{equation} \label{eq:Ps}
P_{s} = \lvert\sqrt{\rho_{d}}\lambda\mathbf{h}_{k}^{T}\mathbf{H}_{bt}\mathbf{w}_{k}s_{k}\rvert^{2},
\end{equation}
\begin{equation} \label{eq:Pi}
P_{I} = \left|\sum_{i = 1, i \neq k}^{K}\sqrt{\rho_{d}}\lambda\mathbf{h}_{k}^{T}\mathbf{H}_{bt}\mathbf{w}_{i}s_{i}\right|^{2},
\end{equation}
respectively. Thus, the ergodic rate for the $k^{\text{th}}$ UT, denoted by $R_k$, can be given by
\begin{equation} \label{eq:Rk}
R_k = \mathbb{E}\left\lbrace \text{log}_2 \left( 1 + \dfrac{P_{s}}{P_{I} + \sigma_{k}^{2}}\right)  \right\rbrace .
\end{equation}
Let $R_K$ denote the ergodic sum rate of all $K$ UTs. Considering the approximation derived in \cite{ReciprocityErrorTCOM2016,LeiSINR2}, $R_K$ is given by
\begin{equation} \label{eq:RK2}
R_K = KR_k \approx K\text{log}_2 \left( 1 + {\mathbb E}\left\lbrace P_{s}\right\rbrace {\mathbb E}\left\lbrace\dfrac{1}{P_{I} + \sigma_{k}^{2}} \right\rbrace \right),
\end{equation} 
where
\begin{equation} \label{eq:EPi}
\mathbb{E}\left\lbrace \frac{1}{P_{I} + \sigma_{k}^{2}} \right\rbrace  = \dfrac{1}{\mathbb{E} \{P_{I} + \sigma_{k}^{2}\}} + \mathcal{O}\left(\dfrac{\textnormal{var}(P_{I} + \sigma_{k}^{2})}{\mathbb{E}\{P_{I} + \sigma_{k}^{2}\}^{3}} \right).
\end{equation}

%\subsubsection{MRT}
%\subsection{No Calibration}
%%Recall \eqref{eq:HdC}, when the MRT is used at the BS, the precoding matrix $ \mathbf{W} $ can be given by
%%\begin{equation} \label{eq:Wmrt}
%%\mathbf{W}_{\textnormal{mrt}} = \hat{\mathbf{H}}_{d,CL}^{H} \: .
%%\end{equation}
%%Let $ \lambda_{\textnormal{mrt}} $ represent the normalisation parameter of the MRT precoding scheme, to meet the power constraint at the BS. 
In this paper, we take the simplest precoding algorithm \cite{T1}, i.e., MRT, as an example. Recall \eqref{eq:HdC}, when the MRT is used at the BS, the precoding matrix $ \mathbf{W} $ can be given by
\begin{equation} \label{eq:Wmrt}
\mathbf{W}_{\textnormal{mrt}} = \hat{\mathbf{H}}_{d,CL}^{H} \: .
\end{equation}
Let $ \lambda_{\textnormal{mrt}} $ represent the normalisation parameter of the MRT precoding scheme, to meet the power constraint at the BS. With no calibration (NC), i.e., $\mathbf{B} = \mathbf{I}_{M}$, the $k^{\text{th}}$ UT's output SINR (signal-to-interference-plus-noise ratio) with uncalibrated MRT precoder has been derived in \cite{ReciprocityErrorTCOM2016}. The analytical result of the output SINR in \cite{ReciprocityErrorTCOM2016} is obtained based on \eqref{eq:EPi}, thus, it can be used to obtain the ergodic sum rate in \eqref{eq:RK2}, as follows:
\begin{lemma} \label{l1}
	Consider a TDD massive MIMO system modelled in Section~\ref{sec:sys}, with uncalibrated MRT precoder at the BS. The closed-form expression of the ergodic sum rate for $K$ UTs, $R_{K,\textnormal{mrt}}^{\textnormal{NC}}$ , is given by
	\begin{align} \label{eq:RKMRTNC}
	% %R_{K,\textnormal{mrt}}^{\textnormal{NC}} \approx K\text{log}_2 \left( 1 + S_{\textnormal{mrt}}^{\textnormal{NC}}(M;\tau_u,\rho_u;A_t,A_r,A_I)I_{\textnormal{mrt}}^{\textnormal{NC}}(K,\rho_d;A_t) \right).
	% %R_{K,\textnormal{mrt}}^{\textnormal{NC}} \approx K\text{log}_2 \left( 1 + S_{\textnormal{mrt}}^{\textnormal{NC}}(M,\Delta_{CE},\Delta_{NR})I_{\textnormal{mrt}}^{\textnormal{NC}}(K,\rho_d,A_t) \right) ,
	R_{K,\textnormal{mrt}}^{\textnormal{NC}} \approx & K\textnormal{log}_2 \left( \! 1 + \rho_{d}A_t\left(\dfrac{a^2A_r((M-1)A_I+2)\!+b^2}{a^2A_r+b^2} \right) \right. \notag \\
	&  \left. \times \left( \dfrac{ K^{2} + \rho_{d}K(K-1)(\rho_{d}A_t^2 + 2A_t)}{(\rho_{d}(K-1)A_t + K)^3} \right) \right) ,
	\end{align}
	%where the signal-related term $S_{\textnormal{mrt}}^{\textnormal{NC}}(M,\Delta_{CE},\Delta_{NR})$ and the interference-and-noise-related term $I_{\textnormal{mrt}}^{\textnormal{NC}}(K,\rho_d,A_t) \right)$
	where the estimation-error-related parameters $a$ and $b$ are given by \eqref{eq:esterr}, and the reciprocity-error-related parameters $A_t$, $A_r$ and $A_I$ are given in \cite{ReciprocityErrorTCOM2016}.
\end{lemma}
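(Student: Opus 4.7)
The plan is to evaluate the approximation in \eqref{eq:RK2} by separately computing $\mathbb{E}\{P_s\}$ and $\mathbb{E}\{1/(P_I+\sigma_k^{2})\}$ for the uncalibrated MRT precoder, then combining them inside the logarithm. Setting $\mathbf{B}=\mathbf{I}_M$ in \eqref{eq:HdC} and using $\mathbf{H}_{bt}\mathbf{E}=\mathbf{H}_{br}$ reduces the calibrated channel estimate to $\hat{\mathbf{H}}_{d,CL}=a\mathbf{H}^{T}\mathbf{H}_{br}+b\mathbf{N}_u^{T}$, and by \eqref{eq:Wmrt} the $k$-th MRT precoding column becomes $\mathbf{w}_k=a\mathbf{H}_{br}^{H}\mathbf{h}_k^{\ast}+b\mathbf{n}_k^{\ast}$, where $\mathbf{h}_k$ and $\mathbf{n}_k$ denote the $k$-th columns of $\mathbf{H}$ and $\mathbf{N}_u$. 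This decomposition cleanly separates the multiplicative reciprocity error carried by $\mathbf{H}_{br}$ from the additive estimation noise weighted by $a,b$ as in \eqref{eq:esterr}.

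First I would determine the normalisation $\lambda_{\textnormal{mrt}}$ via \eqref{eq:lamda}. Since $\mathbf{H}$, $\mathbf{N}_u$ and the diagonal matrices $\mathbf{H}_{bt},\mathbf{H}_{br}$ are mutually independent and $\mathbf{H},\mathbf{N}_u$ have i.i.d.\ $\mathcal{CN}(0,1)$ entries, the $ab$ cross terms in $\mathbb{E}\{\textnormal{tr}(\mathbf{W}_{\textnormal{mrt}}\mathbf{W}_{\textnormal{mrt}}^{H})\}$ vanish and the trace collapses to $MK(a^{2}A_r+b^{2})$, giving a closed form for $\lambda_{\textnormal{mrt}}^{2}$. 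I would then compute $\mathbb{E}\{P_s\}=\rho_d\lambda_{\textnormal{mrt}}^{2}\,\mathbb{E}\{|\mathbf{h}_k^{T}\mathbf{H}_{bt}\mathbf{w}_k|^{2}\}$ by substituting the decomposition of $\mathbf{w}_k$: the $ab$ cross-term vanishes because $\mathbf{n}_k$ is zero-mean and independent of $\mathbf{h}_k$, the $b^{2}$-term is a standard quadratic-in-Gaussian moment that evaluates to $MA_t$, and the dominant $a^{2}$-term is the fourth-order mixed moment $\mathbb{E}\{|\mathbf{h}_k^{T}\mathbf{D}\mathbf{h}_k^{\ast}|^{2}\}$ with $\mathbf{D}=\mathbf{H}_{bt}\mathbf{H}_{br}^{H}$ diagonal.

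For the interference I would exploit the fact that, for $i\neq k$, $\mathbf{w}_i$ depends only on the $i$-th columns of $\mathbf{H}$ and $\mathbf{N}_u$, both independent of $\mathbf{h}_k$. Conditioning on $(\mathbf{w}_i,\mathbf{H}_{bt})$ turns $\mathbf{h}_k^{T}\mathbf{H}_{bt}\mathbf{w}_i$ into a zero-mean conditionally complex Gaussian scalar, and averaging first over $\mathbf{H}_{bt}$ and then over $\mathbf{w}_i$ yields $\mathbb{E}\{|\mathbf{h}_k^{T}\mathbf{H}_{bt}\mathbf{w}_i|^{2}\}=MA_t(a^{2}A_r+b^{2})$; combined with $\lambda_{\textnormal{mrt}}^{2}$ this gives $\mathbb{E}\{P_I+\sigma_k^{2}\}=(\rho_d(K-1)A_t+K)/K$. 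A parallel fourth-order calculation, in which Isserlis' theorem applied to the conditionally Gaussian $\mathbf{h}_k$-dependent terms and the independence of the symbols $\{s_i\}$ leave only a small set of surviving index pairings, produces $\mathbb{E}\{(P_I+\sigma_k^{2})^{2}\}$. Keeping the second-order correction in \eqref{eq:EPi} gives $\mathbb{E}\{1/(P_I+\sigma_k^{2})\}\approx \mathbb{E}\{(P_I+\sigma_k^{2})^{2}\}/\mathbb{E}\{P_I+\sigma_k^{2}\}^{3}$, which furnishes the factor $Y=[K^{2}+\rho_d K(K-1)(\rho_d A_t^{2}+2A_t)]/(\rho_d(K-1)A_t+K)^{3}$; substituting $\mathbb{E}\{P_s\}$ and this expression into \eqref{eq:RK2} assembles \eqref{eq:RKMRTNC}.

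The main obstacle is the fourth-order moment $\mathbb{E}\{|\mathbf{h}_k^{T}\mathbf{D}\mathbf{h}_k^{\ast}|^{2}\}$ in the signal power, because it couples the complex Gaussian vector $\mathbf{h}_k$ with a diagonal product whose entries $d_i=h_{bt,i}h_{br,i}^{\ast}$ are built from correlated truncated Gaussian amplitudes and phases. Writing $\mathbf{h}_k^{T}\mathbf{D}\mathbf{h}_k^{\ast}=\sum_i d_i|h_{k,i}|^{2}$ and expanding the squared modulus as a double sum, the plan is to separate the diagonal part $i=j$, which invokes the fourth moment of $\mathcal{CN}(0,1)$ and delivers the constant $2$, from the off-diagonal part $i\neq j$, which invokes the cross second moment $|\mathbb{E}\{d_i\}|^{2}$ and delivers $(M-1)A_I$, where $A_I$ encodes the moments of the truncated Gaussian RF entries catalogued in \cite{ReciprocityErrorTCOM2016}. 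Once this indexing bookkeeping is handled, together with the analogous but more involved bookkeeping for $\mathbb{E}\{P_I^{2}\}$, the remaining algebra to reach \eqref{eq:RKMRTNC} is mechanical simplification.
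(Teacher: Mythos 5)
Your proposal is correct and takes essentially the same route as the paper: the paper's proof of Lemma~\ref{l1} is simply a pointer to Theorem~1 of \cite{ReciprocityErrorTCOM2016}, whose derivation (mirrored in Appendix~\ref{sec:AppB} here for the calibrated cases) proceeds exactly as you describe — set $\mathbf{B}=\mathbf{I}_M$, compute $\lambda_{\textnormal{mrt}}$, $\mathbb{E}\{P_s\}$, $\mathbb{E}\{P_I\}$ and $\textnormal{var}(P_I+\sigma_k^2)$ via the Gaussian moment bookkeeping you outline, and substitute into \eqref{eq:RK2}--\eqref{eq:EPi}. Your moment values ($MK(a^2A_r+b^2)$ for the normalisation, the diagonal/off-diagonal split giving the $2$ and $(M-1)A_I$ terms, and $\mathbb{E}\{P_I+\sigma_k^2\}=(\rho_d(K-1)A_t+K)/K$) are consistent with the closed form \eqref{eq:RKMRTNC}.
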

\begin{proof}
	See the proof of Theorem 1 in \cite{ReciprocityErrorTCOM2016}.
\end{proof}

%The result in \eqref{eq:RKMRTNC} quantifies the compound effect of the reciprocity and estimation errors on the ergodic achievable sum rate of the MRT precoded system, when the reciprocity calibration is not applied. On the contrary, in the case that the different calibration schemes are applied at the BS, the corresponding analytical results are given as follows.
The result in \eqref{eq:RKMRTNC} quantifies the compound effect of the reciprocity and estimation errors on the ergodic achievable sum rate of the MRT precoded system without calibration. 

%In the following we derive the ergodic sum rate and quantify the effect of estimation error and reciprocity error for both inverse and relative calibration schemes.

%\subsection{Inverse Calibration}
When the inverse calibration is applied at the BS for the MRT precoder, the precoding matrix $\mathbf{W}_{\text{mrt}}$ can be expressed as
	\begin{align} \label{eq:WmrtIC}
	\mathbf{W}_{\textnormal{mrt}}^{\textnormal{IC}}  & = \left( a\mathbf{H}_{d}\mathbf{E}\mathbf{B}_{\text{IC}} + b\mathbf{N}_{u}^{T}\mathbf{B}_{\text{IC}}\right)^{H} \notag \\
	& = a\mathbf{H}_{bt}^{-1}\mathbf{H}^{\ast} + b \mathbf{H}_{bt}^{-1}\left(\mathbf{H}_{br}^{\ast}\right)^{-1}\mathbf{N}_{u}^{\ast} ,
	\end{align}
	Then the corresponding normalisation parameter $ \lambda_{\textnormal{mrt}}^{\textnormal{IC}} $, the expected values of the desired signal power and the interference power of the $k^{\text{th}}$ UT, i.e., ${\mathbb E}\left\lbrace P_{s,\text{mrt}}^{\text{IC}}\right\rbrace$ and ${\mathbb E}\left\lbrace1/\left( P_{I,\text{mrt}}^{\text{IC}} + \sigma_{k}^{2}\right) \right\rbrace$, can be derived. Details are provided in Appendix~\ref{sec:AppB}. These values are used to calculate the ergodic sum rate in the following proposition:
    \begin{proposition} \label{pRIC}
    Assuming that the same conditions are held as in the Lemma~\ref{l1}, while the inverse calibration is applied at the BS. The ergodic sum rate for $K$ UTs, $R_{K,\textnormal{mrt}}^{\textnormal{IC}}$ , is given by
    \begin{align} \label{eq:RKMRTIC}         R_{K,\textnormal{mrt}}^{\textnormal{IC}} \approx & K\textnormal{log}_2 \left( 1 + \rho_{d}\left(\dfrac{a^2(M-1)+b^2E_{\bar{2}}^r}{a^2+b^2E_{\bar{2}}^r} \right) \right. \notag \\
    & \!\!\! \left. \times \left( \dfrac{ (KE_{\bar{2}}^t)^2 + \rho_{d}K(K-1)(\rho_{d} + 2E_{\bar{2}}^t)}{(\rho_{d}(K-1) + KE_{\bar{2}}^t)^3} \right) \right) ,
    \end{align}
    where the reciprocity-error-related parameters $E_{\bar{2}}^r$ and $E_{\bar{2}}^t$ are given by \eqref{eq:E-2r} and \eqref{eq:E-2t} respectively, in Appendix~\ref{sec:AppA}.
    \end{proposition}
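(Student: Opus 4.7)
The plan is to follow the same template as Lemma~\ref{l1}: start from the approximation \eqref{eq:RK2}, evaluate the two moments $\mathbb{E}\{P_s\}$ and $\mathbb{E}\{1/(P_I+\sigma_k^2)\}$ under the inverse-calibrated precoder \eqref{eq:WmrtIC}, and assemble the result. The crucial structural simplification that inverse calibration provides is that pre-multiplying by $\mathbf{H}_{bt}$ eliminates the transmit RF distortion from the signal path, since $\mathbf{H}_{bt}\mathbf{W}_{\textnormal{mrt}}^{\textnormal{IC}} = a\mathbf{H}^{\ast} + b(\mathbf{H}_{br}^{\ast})^{-1}\mathbf{N}_{u}^{\ast}$. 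Consequently, the effective beamforming channel seen by the $k$-th user decomposes into a clean coherent term $a\mathbf{h}_k^{T}\mathbf{h}_k^{\ast} = a\|\mathbf{h}_k\|^2$ plus a noise-residual term weighted by $(\mathbf{H}_{br}^{\ast})^{-1}$. This is where the reciprocity moment $E_{\bar{2}}^r = \mathbb{E}\{|h_{br,i}|^{-2}\}$ enters, and where the decoupling from $\mathbf{H}_{bt}$ should collapse the Tx contribution in the signal numerator.

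First, I would compute the normalisation parameter by evaluating $\mathbb{E}\{\textnormal{tr}(\mathbf{W}_{\textnormal{mrt}}^{\textnormal{IC}}(\mathbf{W}_{\textnormal{mrt}}^{\textnormal{IC}})^H)\}$. Expanding \eqref{eq:WmrtIC} and using the independence of $\mathbf{H}$, $\mathbf{N}_u$, $\mathbf{H}_{bt}$, and $\mathbf{H}_{br}$, the cross-terms vanish and the trace separates into the channel contribution $a^2 K \mathbb{E}\{|h_{bt,i}|^{-2}\} \cdot M$ and the estimation-noise contribution $b^2 K \mathbb{E}\{|h_{bt,i}|^{-2}|h_{br,i}|^{-2}\} \cdot M$. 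Invoking the independence of the Tx and Rx reciprocity errors this reduces to a closed expression in the moments $E_{\bar{2}}^t$ and $E_{\bar{2}}^r$ defined in Appendix~\ref{sec:AppA}, which then yields $\lambda_{\textnormal{mrt}}^{\textnormal{IC}}$ via \eqref{eq:lamda}.

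Next, I would compute $\mathbb{E}\{P_{s,\textnormal{mrt}}^{\textnormal{IC}}\}$. Substituting the simplified $\mathbf{H}_{bt}\mathbf{w}_k^{\textnormal{IC}}$ into \eqref{eq:Ps} and using $\mathbb{E}\{\|\mathbf{h}_k\|^4\} = M(M+1)$ together with $\mathbb{E}\{\mathbf{h}_k^{T}(\mathbf{H}_{br}^{\ast})^{-1}\mathbf{n}_{u,k}^{\ast}\mathbf{n}_{u,k}^{T}(\mathbf{H}_{br})^{-1}\mathbf{h}_k^{\ast}\} = M E_{\bar{2}}^r$ gives a clean numerator that, after multiplying in $\lambda_{\textnormal{mrt}}^{\textnormal{IC}}$, produces the first parenthetical factor of \eqref{eq:RKMRTIC}. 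For the denominator $\mathbb{E}\{1/(P_I+\sigma_k^2)\}$, I would apply the Delta-method expansion \eqref{eq:EPi}, retaining only the leading term and computing $\mathbb{E}\{P_I\}$ as a sum over $i\neq k$ of terms of the form $\mathbb{E}\{|\mathbf{h}_k^T\mathbf{h}_i^\ast|^2\} + b^2 E_{\bar{2}}^r \mathbb{E}\{\|\mathbf{h}_k\|^2\}$; these produce the factor $(\rho_d(K-1)+KE_{\bar{2}}^t)$ and, combined with the variance correction, the cubic denominator in \eqref{eq:RKMRTIC}.

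The main obstacle I foresee is bookkeeping the fourth-order moments that mix the Gaussian channel, Gaussian estimation noise, and the two (independent) reciprocity-error diagonals: several cross-terms must be shown to vanish by oddness in $\mathbf{h}_k$ or $\mathbf{n}_u$, and the surviving terms must be repackaged in terms of $E_{\bar{2}}^t$ and $E_{\bar{2}}^r$ alone. The other delicate point is controlling the variance term in \eqref{eq:EPi} tightly enough to justify the $\approx$ in \eqref{eq:RKMRTIC}, which follows the same large-$M$ argument used in \cite{ReciprocityErrorTCOM2016,LeiSINR2}. Once these moment identities are tabulated, substituting into \eqref{eq:RK2} gives \eqref{eq:RKMRTIC} directly, and the detailed algebra can be deferred to Appendix~\ref{sec:AppB} as the statement already indicates.
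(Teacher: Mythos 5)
Your proposal follows essentially the same route as the paper's Appendix~\ref{sec:AppB}: exploit that $\mathbf{H}_{bt}\mathbf{W}_{\textnormal{mrt}}^{\textnormal{IC}} = a\mathbf{H}^{\ast} + b(\mathbf{H}_{br}^{\ast})^{-1}\mathbf{N}_{u}^{\ast}$, compute $\lambda_{\textnormal{mrt}}^{\textnormal{IC}}$, $\mathbb{E}\{P_{s,\text{mrt}}^{\text{IC}}\}$, $\mathbb{E}\{P_{I,\text{mrt}}^{\text{IC}}\}$ and its variance via the moments $E_{\bar{2}}^t$, $E_{\bar{2}}^r$, and then insert them into \eqref{eq:EPi} and \eqref{eq:RK2}; the steps and resulting moment values (e.g.\ $a^2M(M+1)+b^2ME_{\bar{2}}^r$, $\rho_d(K-1)/(KE_{\bar{2}}^t)$) match the paper's. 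The only caveat is cosmetic and internal to the paper itself: your signal-power term yields $a^2(M+1)$, consistent with \eqref{eq:EPsmrtIC}, whereas the stated expression \eqref{eq:RKMRTIC} shows $a^2(M-1)$ -- not a flaw in your argument, and otherwise you would also need the $a^2$ weight on the $\lvert\mathbf{h}_k^T\mathbf{h}_i^\ast\rvert^2$ interference term, which is clearly just a typo in your sketch.
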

    \begin{proof}
    	See Appendix~\ref{sec:AppB}.
    \end{proof}
%	The corresponding normalisation parameter $ \lambda_{\textnormal{mrt}}^{\textnormal{RC}} $ is given by
%	\begin{equation} \label{eq:lamdamrtRC}
%	\lambda_{\textnormal{mrt}}^{\textnormal{RC}} = \sqrt{\dfrac{1}{MK\left( A_t(a^2+b^2E_{r}^{-2})\right)}},
%	\end{equation}
%	where the reciprocity-error-related parameters $A_t$ and $E_{r}^{-2}$ are given in Appendix. The proof of \eqref{eq:lamdamrtRC} is briefed in Appendix. 
%\subsubsection{\textbf{Relative Calibration}} 
When the traditional relative calibration is applied at the BS, the precoding matrix $\mathbf{W}_{\text{mrt}}$ becomes
	\begin{align} \label{eq:WmrtRC}
	\mathbf{W}_{\textnormal{mrt}}^{\textnormal{RC}}  & = \left( a\mathbf{H}_{d}\mathbf{E}\mathbf{B}_{\text{RC}} + b\mathbf{N}_{u}^{T}\mathbf{B}_{\text{RC}}\right)^{H} \notag \\
	& = a\mathbf{H}_{bt}^{\ast}\mathbf{H}^{\ast} + b \mathbf{H}_{bt}^{\ast}\left( \mathbf{H}_{br}^{\ast}\right)^{-1} \mathbf{N}_{u}^{\ast}.
	\end{align}
	Similar to the case of the inverse calibration, $ \lambda_{\textnormal{mrt}}^{\textnormal{IC}} $, ${\mathbb E}\left\lbrace P_{s,\text{mrt}}^{\text{IC}}\right\rbrace$ and ${\mathbb E}\left\lbrace1/( P_{I,\text{mrt}}^{\text{IC}} + \sigma_{k}^{2}) \right\rbrace$ can be derived (see Appendix~\ref{sec:AppB}). Then the corresponding closed-form expression of the ergodic sum rate can be given as follows:
	\begin{proposition} \label{pRRC}
	Assuming that the same conditions are held as in the Lemma~\ref{l1}, while the relative calibration is applied at the BS. The ergodic sum rate for $K$ UTs, $R_{K,\textnormal{mrt}}^{\textnormal{RC}}$ , is given by
	\begin{align} \label{eq:RKMRTRC} \!\!\! R_{K,\textnormal{mrt}}^{\textnormal{RC}} \approx & K\textnormal{log}_2 \left( \! 1 \!+\! \rho_{d}\left(\dfrac{a^2((M\!-\!1)A_t^2\!+\!2E_4^t)\!+b^2E_4^tE_{\bar{2}}^r}{a^2+b^2E_{\bar{2}}^r} \right) \right. \notag \\
	& \!\!\! \left. \times \left( \dfrac{ K^{2}A_t^2 \!+\! \rho_{d}K(K\!-\!1)E_4^t(\rho_{d}E_4^t \!+\! 2A_t)}{(\rho_{d}(K-1)E_4^t + KA_t)^3} \right) \right) ,
	\end{align}
	where the reciprocity-error-related parameters $E_4^t$ is given by \eqref{eq:E4t} in Appendix~\ref{sec:AppA}.
	\end{proposition}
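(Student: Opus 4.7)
The plan is to mirror the derivation of Proposition~\ref{pRIC} step by step but with the relative-calibration precoder $\mathbf{W}_{\textnormal{mrt}}^{\textnormal{RC}}$ of \eqref{eq:WmrtRC} in place of $\mathbf{W}_{\textnormal{mrt}}^{\textnormal{IC}}$. Substituting \eqref{eq:WmrtRC} into \eqref{eq:Ps} and \eqref{eq:Pi}, the relevant quadratic form becomes
\[
\mathbf{h}_k^T\mathbf{H}_{bt}\mathbf{w}_k^{\textnormal{RC}} = a\,\mathbf{h}_k^T\mathbf{H}_{bt}\mathbf{H}_{bt}^{\ast}\mathbf{h}_k^{\ast} + b\,\mathbf{h}_k^T\mathbf{H}_{bt}\mathbf{H}_{bt}^{\ast}(\mathbf{H}_{br}^{\ast})^{-1}\mathbf{n}_{u,k}^{\ast},
\]
where $\mathbf{H}_{bt}\mathbf{H}_{bt}^{\ast}$ is diagonal with entries $|h_{bt,i}|^2$. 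The crucial structural difference from the inverse-calibration case is that $\mathbf{H}_{bt}$ now appears to the fourth power inside the signal term, which is exactly what produces the parameter $E_4^t$ in the final expression and is the reason the clean cancellation $\mathbf{H}_{bt}^{-1}\mathbf{H}_{bt}=\mathbf{I}$ available in Proposition~\ref{pRIC} is no longer possible.

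Following the three-stage template of Appendix~\ref{sec:AppB}, I would first compute the normalisation $\lambda_{\textnormal{mrt}}^{\textnormal{RC}}$ from \eqref{eq:lamda} by expanding $\mathbb{E}\{\textnormal{tr}(\mathbf{W}_{\textnormal{mrt}}^{\textnormal{RC}}(\mathbf{W}_{\textnormal{mrt}}^{\textnormal{RC}})^H)\}$: the $\mathbf{H}$--$\mathbf{N}_u$ cross-terms vanish by zero-mean independence of $\mathbf{N}_u$, and the two surviving pieces reduce via the diagonal structure and mutual independence of $\mathbf{H}_{bt}$, $\mathbf{H}_{br}$, $\mathbf{H}$, and $\mathbf{N}_u$ to scalar combinations of $A_t$, $E_4^t$, $E_{\bar 2}^r$. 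Next I would compute $\mathbb{E}\{P_{s,\textnormal{mrt}}^{\textnormal{RC}}\}$ by expanding $|\mathbf{h}_k^T\mathbf{H}_{bt}\mathbf{H}_{bt}^{\ast}\mathbf{h}_k^{\ast}|^2$ into a double sum and separating the diagonal ($i=j$) contributions, which produce the $2E_4^t$ factor, from the off-diagonal ($i\neq j$) ones, which produce $(M-1)A_t^2$; the noise-driven part yields the term $b^2 E_4^t E_{\bar 2}^r$ and together these match the first bracket of \eqref{eq:RKMRTRC}. Finally, for $\mathbb{E}\{1/(P_{I,\textnormal{mrt}}^{\textnormal{RC}}+\sigma_k^2)\}$ I would invoke the delta-method approximation \eqref{eq:EPi}, compute $\mathbb{E}\{P_I\}$ and $\textnormal{var}(P_I)$ using independence across the $K-1$ interfering users, and collect second and fourth moments of $h_{bt,i}$ to obtain the denominator $(\rho_d(K-1)E_4^t + KA_t)^3$ and the associated numerator of the second bracket. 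Substituting all three into \eqref{eq:RK2} with $\sigma_k^2=1$ yields the claimed closed form.

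The main obstacle is the bookkeeping of fourth-order moments of $\mathbf{H}_{bt}$: because the amplitude errors are truncated-Gaussian rather than complex Gaussian, one cannot invoke Isserlis' theorem or Wishart identities, and each fourth-order contribution must be reduced to the scalars $A_t$ and $E_4^t$ separately, with their multiplicative coupling to $E_{\bar 2}^r$ through the noise term making the algebra noticeably heavier than in Proposition~\ref{pRIC}. A second subtlety is confirming that the $\mathcal{O}(\cdot)$ correction in \eqref{eq:EPi} remains sub-dominant in the large-$M$ regime after the extra $E_4^t$ factor is included, so that the approximate equality in \eqref{eq:RKMRTRC} is justified in the same sense as in \cite{LeiSINR2,ReciprocityErrorTCOM2016}.
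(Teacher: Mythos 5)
Your proposal is correct and follows essentially the same route as the paper: Appendix~\ref{sec:AppB} derives Proposition~\ref{pRRC} by repeating the inverse-calibration template with $\mathbf{W}_{\textnormal{mrt}}^{\textnormal{RC}}$, computing $\lambda_{\textnormal{mrt}}^{\textnormal{RC}}$, $\mathbb{E}\{P_{s,\textnormal{mrt}}^{\textnormal{RC}}\}$ and $\mathbb{E}\{1/(P_{I,\textnormal{mrt}}^{\textnormal{RC}}+\sigma_k^2)\}$ via the approximation \eqref{eq:EPi} and substituting into \eqref{eq:RK2}, exactly as you outline. Your moment bookkeeping (diagonal terms giving $2E_4^t$, off-diagonal giving $(M-1)A_t^2$, and the noise term giving $b^2E_4^tE_{\bar 2}^r$) reproduces the intermediate results \eqref{eq:lamdamrtRC}--\eqref{eq:EPINmrtRC}.
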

	\begin{proof}
		See Appendix~\ref{sec:AppB}.
	\end{proof}

Comparing \eqref{eq:RKMRTIC} with \eqref{eq:RKMRTRC}, it can be seen that the inverse calibration completely removes phase reciprocity error, and leaves the MRT precoder with the negligible amplitude-error-related parameters, i.e., $E_{\bar{2}}^t$ and $E_{\bar{2}}^r$. On the contrary, the relative calibration introduces the residual amplitude error into the MRT system, i.e., $A_t^2$ and $E_4^t$ in \eqref{eq:RKMRTRC} (based on the results in Appendix~\ref{sec:AppA}, we have $E_4^t > A_t^2 > A_t$). This in turn can cause the aforementioned estimation error amplification. Consequently, the compound effect of the residual amplitude reciprocity error and the ``amplified" channel estimation error may result in a significant performance loss in the system with the relative calibration. 
%Comparing \eqref{eq:RKMRTIC} with \eqref{eq:RKMRTRC}, it can be seen that the inverse calibration completely removes phase reciprocity error, and leaves the MRT precoder with the negligible amplitude-error-related parameters, i.e., $E_{\bar{2}}^t$ and $E_{\bar{2}}^r$, that are away from $M$. On the contrary, the residual amplitude error in the relative calibration has non-trivial impacts on the MRT, see $A_t^2$ and $E_4^t$ in \eqref{eq:RKMRTRC}, which can also cause the aforementioned estimation error amplification. The amplified compound effect of both reciprocity and estimation errors may result in a significant performance loss in the system with the relative calibration. 

%\subsubsection{ZF}
%\textcolor{red}{Move on to the ZF precoded system.}

\section{Simulation Results} \label{sec:re}
%In this section, we perform Monte-Carlo simulations to corroborate the analytical expressions of the pre-equalisation ergodic sum rate and the post-equalisation MSE obtained in Section~\ref{sec:performance}, and compare the performance of the inverse calibration and the traditional relative calibration in different scenarios, in order to provide useful implications for the practical system design. Unless otherwise specified, the statistical magnitudes of both amplitude and phase reciprocity error are identical to that of the ``High Level Reciprocity Error" defined in \cite{ReciprocityErrorTCOM2016} with quadruple notations, such that $ (\alpha_{bt,0}, \sigma_{bt}^2, [a_{t}, b_{t}]) = (\alpha_{br,0}, \sigma_{br}^2, [a_{r}, b_{r}]) = (0\ \textnormal{dB}, 1, [-4\ \textnormal{dB}, 4\ \textnormal{dB}])$, and $(\theta_{bt,0}, \sigma_{\varphi_{t}}^2, [\theta_{t,1}, \theta_{t,2}]) = (\theta_{br,0}, \sigma_{\varphi_{r}}^2, [\theta_{r,1}, \theta_{r,2}])
In this section, we perform Monte-Carlo simulations to corroborate the analysis presented in Section~\ref{sec:performance}, and compare the performance of the calibration schemes discussed in the paper under different scenarios, in order to provide valuable insights into the practical system design. Unless otherwise specified, the statistical magnitudes of both amplitude and phase reciprocity errors are identical to that defined in \cite{ReciprocityErrorTCOM2016,R1100426}, with quadruple notations, such that $ (\alpha_{bt,0}, \sigma_{bt}^2, [a_{t}, b_{t}]) = (\alpha_{br,0}, \sigma_{br}^2, [a_{r}, b_{r}]) = (0\ \textnormal{dB}, 1, [-4\ \textnormal{dB}, 4\ \textnormal{dB}])$, and $(\theta_{bt,0}, \sigma_{\varphi_{t}}^2, [\theta_{t,1}, \theta_{t,2}]) = (\theta_{br,0}, \sigma_{\varphi_{r}}^2, [\theta_{r,1}, \theta_{r,2}]) = (0^{\circ}, 1, [-50^{\circ}\!, \ 50^{\circ}]) $, as considered in \cite{ReciprocityErrorTCOM2016}. We also consider a reference scenario, namely, ``Perfect Channel Reciprocity", for the case that $\sigma_{bt}^2 = \sigma_{bt}^2 = \sigma_{\varphi_{t}}^2 = \sigma_{\varphi_{t}}^2 = 0$. %and the ``Perfect Channel Information" with perfect channel reciprocity and sufficiently large UL SNR, e.g., $\rho_u = 20\ \textnormal{dB}$. 
The orthogonal UL pilots are of length $\tau_u = K$, and the ergodic sum rate is measured in bits/s/Hz.

First, we investigate the ergodic sum rate of MRT with \textit{IC} and \textit{RC}, for different DL SNR regimes. As a benchmark, we consider perfect channel reciprocity and no calibration, \textit{NC}. We use $M = 100$, $K = 10$, and reciprocity-error-related parameters as mentioned before. Fig.~\ref{fig:fig3} illustrates that, for the MRT precoded system, IC nearly eliminates the effect of reciprocity error, and outperforms RC. Thus, we conclude that IC is more efficient than RC for MRT. More specifically, increasing the DL transmit power 10 times, e.g., $\rho_d$ from 0 $\text{dB}$ to 10 $\text{dB}$, the performance of IC increases by 36\% (i.e., 9 bits/s/Hz), while only 12\% improvement (i.e., 3 bits/s/Hz) for RC. This can also be approved analytically based on the comparison between \eqref{eq:RKMRTIC} and \eqref{eq:RKMRTRC}. As mentioned earlier, both IC and RC can remove the phase reciprocity error. However, RC suffers from the strong residual amplitude error, which can cause estimation error amplification that could outweigh the benefit of using RC. We further analyse this effect of estimation error amplification in RC in Fig.~\ref{fig:fig4} and~\ref{fig:fig5}.
%Next, we investigate the ergodic sum rate of MRT with the inverse calibration, \textit{IC}, and the traditional relative calibration, \textit{RC}, in the different DL SNR regime, compared with the best case scenario, i.e., the perfect channel reciprocity, and the case without calibration, \textit{NC}. We use $M = 100$, $K = 10$, and the reciprocity-error-related parameters that are same as the ``High Level Reciprocity Error". Fig.~\ref{fig:fig3} illustrates that, for the MRT precoded system, IC nearly eliminates the effect of the reciprocity error, and in general outperforms RC. Thus we conclude that IC is more efficient than RC for MRT. More specifically, increasing the DL transmit power 10 times, e.g., $\rho_d$ from 0 $\text{dB}$ to 10 $\text{dB}$, the performance of IC increases by 36\%, while only 12\% improvement for RC. This can also be approved analytically based on the comparison between \eqref{eq:RKMRTIC} and \eqref{eq:RKMRTRC}. As discussed following Theorem~\ref{t3}, both IC and RC can remove the phase reciprocity error. However, RC suffers from the strong residual amplitude error, which can cause the estimation error amplification and even outweigh the benefit of using the relative calibration. We further analyse this issue by the following simulations in Fig.~\ref{fig:fig4} and~\ref{fig:fig5}. %illustrates the ergodic sum rate of MRT with different calibration schemes when $\rho_u = 0 \ \text{(dB)}$. 
\begin{figure}[!t]
	\centering
	\includegraphics[angle=0,width=0.45\textwidth]{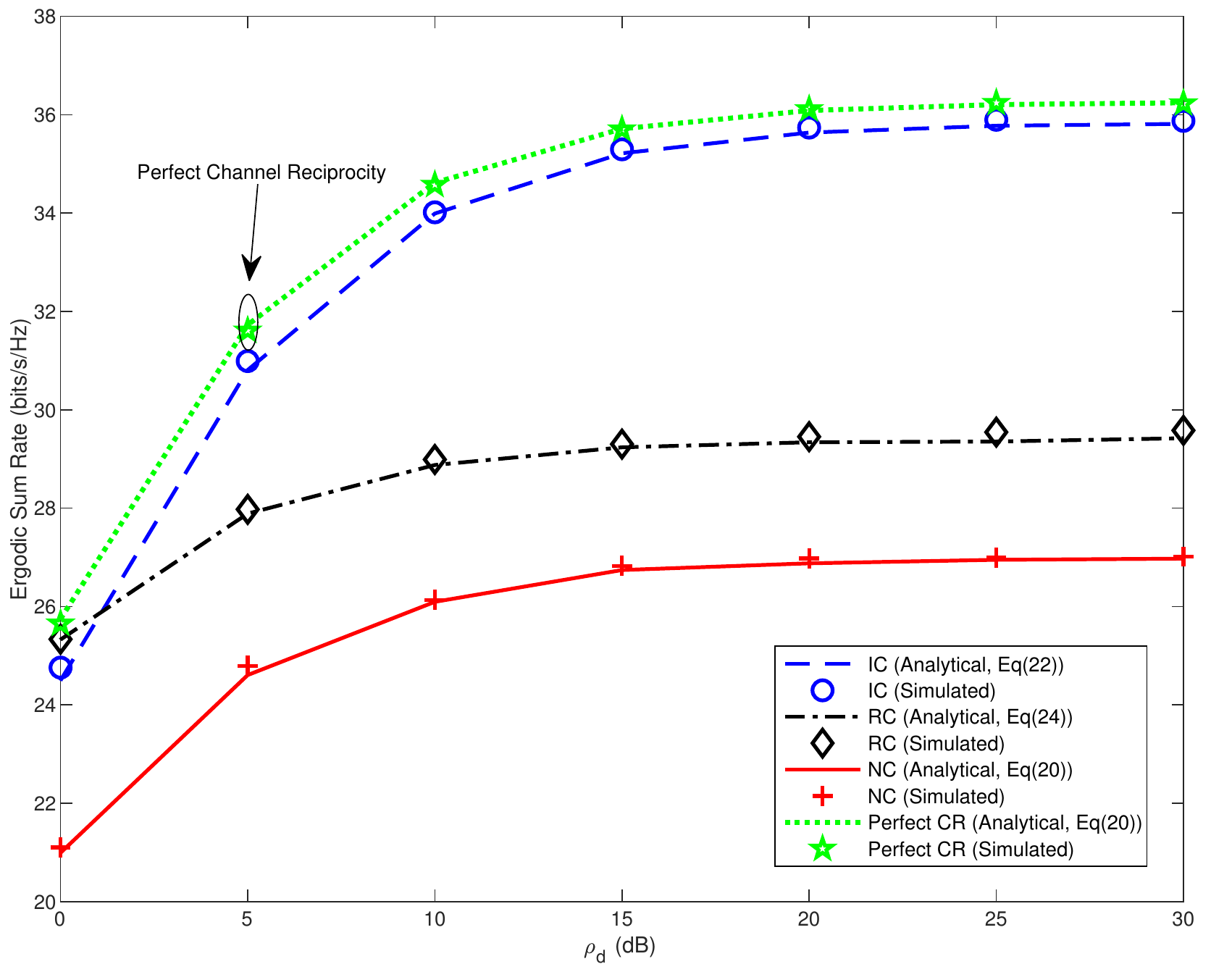} %width=3.5in
	% where an .eps filename suffix will be assumed under latex, 
	% and a .pdf suffix will be assumed for pdflatex; or what has been declared
	% via \DeclareGraphicsExtensions.
	\caption{Ergodic Sum Rate versus DL SNR in the presence of the high level reciprocity error and channel estimation error with $\rho_u = 0 \ \text{dB}$.}
	\label{fig:fig3}
\end{figure}

\begin{figure}[!t]
	\centering
	\includegraphics[angle=0,width=0.45\textwidth]{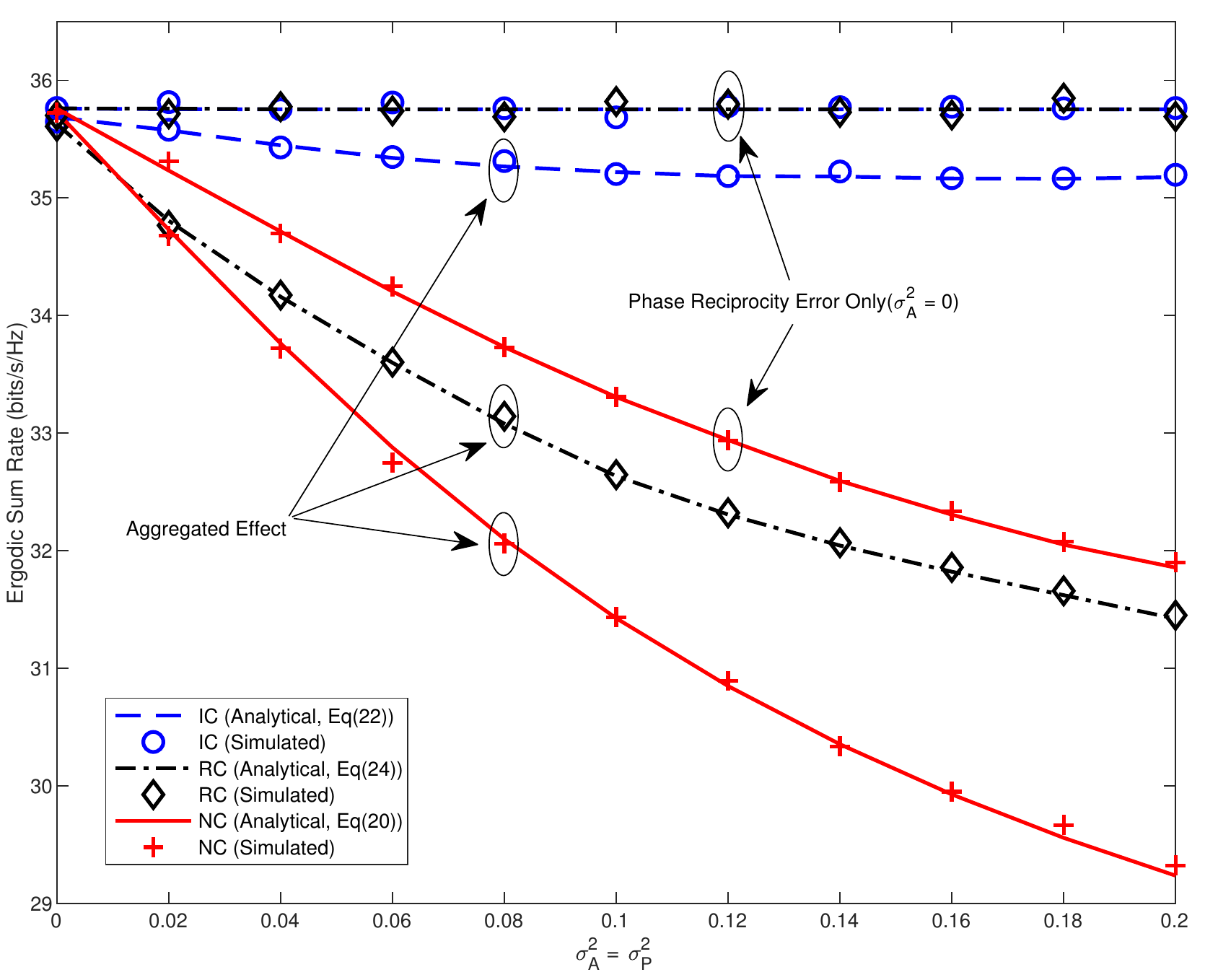} %width=3.5in
	% where an .eps filename suffix will be assumed under latex, 
	% and a .pdf suffix will be assumed for pdflatex; or what has been declared
	% via \DeclareGraphicsExtensions.
	\caption{Ergodic Sum Rate of MRT in the presence of different combinations of amplitude and phase reciprocity errors.}
	\label{fig:fig4}
\end{figure}
In Fig.~\ref{fig:fig4}, we assume $M = 100$, $K = 10$, $\rho_d = 10\ \text{dB}$, and a low level estimation error with $\rho_u = 10\ \text{dB}$. Let $\sigma_A^2 = \sigma_P^2$ be the x-axis that captures the aggregated effect of both amplitude and phase reciprocity errors, i.e., $ (\alpha_{bt,0}, \sigma_{bt}^2, [a_{t}, b_{t}]) = (\alpha_{br,0}, \sigma_{br}^2, [a_{r}, b_{r}]) = (0\ \textnormal{dB}, \sigma_A^2, [-4\ \textnormal{dB}, 4\ \textnormal{dB}])$, and the phase error has $(\theta_{bt,0}, \sigma_{\varphi_{t}}^2, [\theta_{t,1}, \theta_{t,2}]) = (\theta_{br,0}, \sigma_{\varphi_{r}}^2, [\theta_{r,1}, \theta_{r,2}]) = (0^{\circ}, \sigma_P^2, [-50^{\circ}\!, \ 50^{\circ}]) $. We consider the case with the phase error only by setting $\sigma_A^2 = 0$. Fig.~\ref{fig:fig4} shows that: \textit{a)} both IC and RC eliminate the phase error, with a 12.5 \% increase in the sum rate compared with NC (i.e., 4 bits/s/Hz more than 32 bits/s/Hz of NC at $\sigma_A^2 = 0$ and $\sigma_P^2 = 0.2$); \textit{b)} the performance gain of RC over NC decreases by 50 \% in the case with the aggregated effect of the amplitude and phase error (i.e., only 2 bits/s/Hz improvement compared to NC at $\sigma_A^2 = \sigma_P^2 = 0.2$, which is 50 \% of the previous 4 bits/s/Hz improvement in (a)), whereas the gain of IC over NC increases by 50 \% in this case (i.e., around 6 bits/s/Hz improvement compared to NC at $\sigma_A^2 = \sigma_P^2 = 0.2$); \textit{c)} RC may not work properly in the presence of the compound effect of both reciprocity and estimation errors, see $\sigma_A^2$, $\sigma_P^2 < 0.02$. We now take a closer look at the third observation, \textit{c}, as follows.
%In Fig.~\ref{fig:fig4}, we have $M = 100$, $K = 10$, $\rho_d = 10\ \text{dB}$, and a low level estimation error with $\rho_u = 10 \ \text{dB}$. Let $\sigma_A^2 = \sigma_P^2$ be the x-axis to capture the aggregated effect of both amplitude and phase reciprocity errors, such that the amplitude error has $ (\alpha_{bt,0}, \sigma_{bt}^2, [a_{t}, b_{t}]) = (\alpha_{br,0}, \sigma_{br}^2, [a_{r}, b_{r}]) = (0 \textnormal{dB}, \sigma_A^2, [-4 \textnormal{dB}, 4 \textnormal{dB}])$, and the phase error has $(\theta_{bt,0}, \sigma_{\varphi_{t}}^2, [\theta_{t,1}, \theta_{t,2}]) = (\theta_{br,0}, \sigma_{\varphi_{r}}^2, [\theta_{r,1}, \theta_{r,2}])
%= (0^{\circ}, \sigma_P^2, [-50^{\circ}\!, \ 50^{\circ}]) $. Particularly, we can consider the case with the phase error only by setting $\sigma_A^2 = 0$. Fig.~\ref{fig:fig4} shows that: \textit{a)} both IC and RC eliminate the phase error, with a 12.5 \% increase in the sum rate compared with NC; \textit{b)} the performance gain of RC decreases by 50 \% in the case with the aggregated effect of the amplitude and phase error, whereas the gain of IC increases by 50 \% in this case; \textit{c)} RC may not work properly in the presence of the compound effect of both reciprocity and estimation errors, see $\sigma_A^2$, $\sigma_P^2 < 0.02$. We now take a closer look at the third issue as follows.

Similar parameters are considered as in Fig.~\ref{fig:fig4}. In addition, higher levels of channel estimation error are introduced, e.g., $\rho_u = 0\ \text{dB}$ in Fig.~\ref{fig:fig5}(a) and $\rho_u = -5\ \text{dB}$ in Fig.~\ref{fig:fig5}(b). It can be seen that the gain of the relative calibration vanishes in the case of the severe estimation error, whereas the inverse calibration, due to its greater robustness to the compound effect of the reciprocity error and estimation error, still works effectively with only a minor performance degradation.
%The similar parameters are considered in Fig.~\ref{fig:fig5} as that in Fig.~\ref{fig:fig4}. In addition, the higher levels of the channel estimation error are introduced, such that $\rho_u = 0\ \text{dB}$ in Fig.~\ref{fig:fig5}(a) and $\rho_u = -5\ \text{dB}$ in Fig.~\ref{fig:fig5}(b). It can be seen that the gain of RC vanishes in the case of the severe estimation error, whereas IC still works correctly with only a minor performance degradation, due to the greater robustness of IC to the compound effect of the reciprocity error and estimation error.
\begin{figure}[!t]
	\centering
	\includegraphics[angle=0,width=0.45\textwidth]{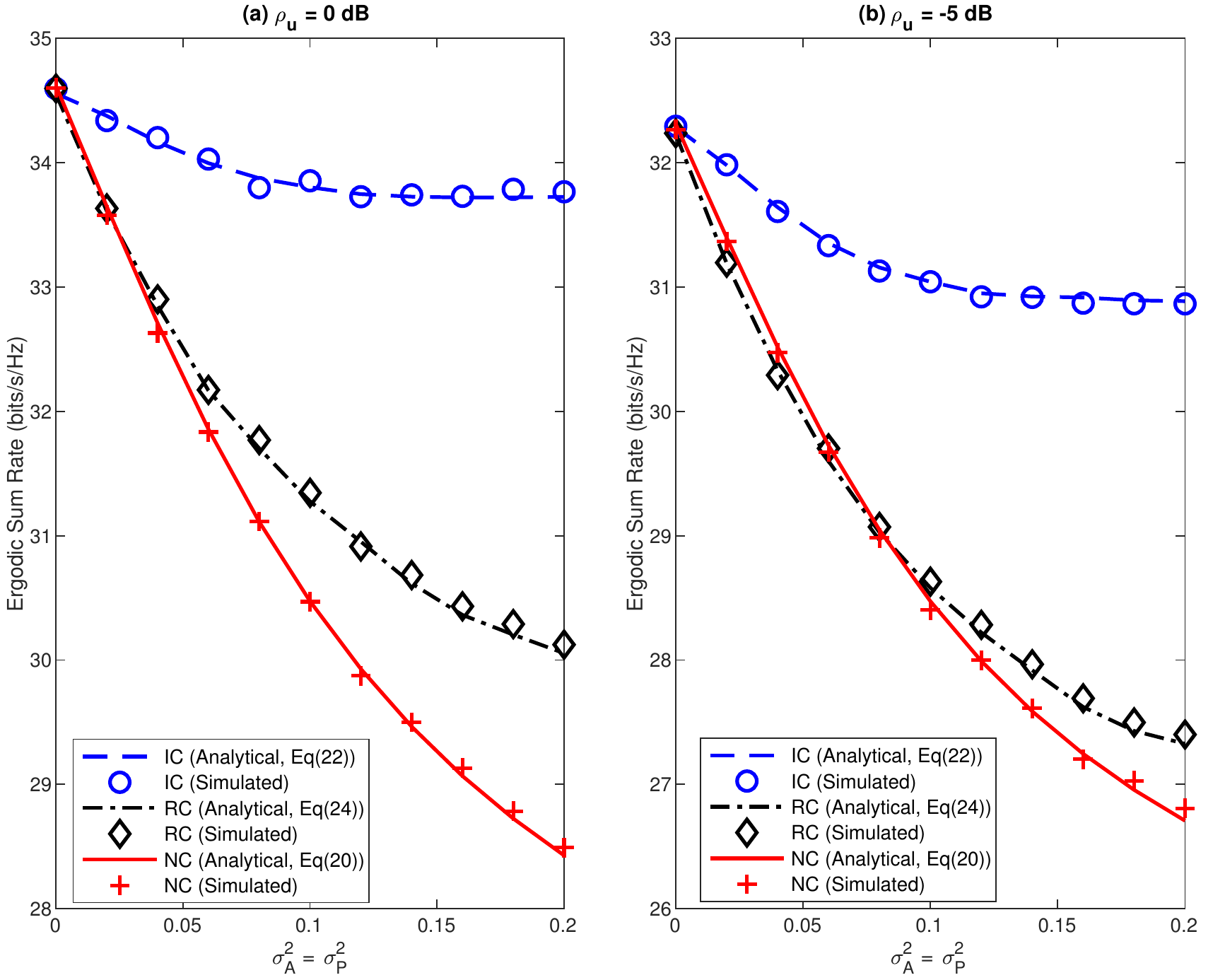} %width=3.5in
	% where an .eps filename suffix will be assumed under latex, 
	% and a .pdf suffix will be assumed for pdflatex; or what has been declared
	% via \DeclareGraphicsExtensions.
	\caption{Ergodic Sum Rate versus Reciprocity Error Variance with the different level of channel estimation error.}
	\label{fig:fig5}
\end{figure}

% Note that IEEE does not put floats in the very first column - or typically
% anywhere on the first page for that matter. Also, in-text middle ("here")
% positioning is not used. Most IEEE journals use top floats exclusively.
% Note that, LaTeX2e, unlike IEEE journals, places footnotes above bottom
% floats. This can be corrected via the \fnbelowfloat command of the
% stfloats package.

\section{Conclusion} \label{sec:con}
In this paper, we have presented and investigated two calibration schemes, i.e., inverse calibration and relative calibration, in the TDD massive MU-MIMO system. A low-cost calibration circuit has been considered, which enables the BS to select these two calibration algorithms. The performance of both calibration algorithms has been evaluated, by taking into account the compound effect of the multiplicative reciprocity error and the additive estimation error. Particularly, we have derived closed-formed expressions for the ergodic sum rate of the considered system, assuming MRT with the compound effect of both errors. We have demonstrated that the inverse calibration in general outperforms the traditional relative calibration. Analytical results perfectly match the simulated results for different scenarios, including large or practical number of BS antennas, different number of UTs and different combinations of the reciprocity error and estimation error. The comprehensive performance analysis has been given to provide a useful guidance for the selection of the calibration schemes in the massive MIMO system, which inevitable in practice.

For the way forward, one can consider the application of the inverse calibration in different scenarios, e.g., with different precoding schemes. It can also be of interest to carry out a field trial of the presented calibration circuits and algorithms, thanks to the simplicity of their design principles.

\appendices
\section{} \label{sec:AppA}
Recall the probability density function of a truncated Gaussian distributed variable $ x \sim \mathcal{N}_{\text T}(\mu, \sigma^2), x \in [a, b] $ (also $\alpha$, $\beta$ and $Z$) in \cite{ReciprocityErrorTCOM2016}, the $l^{\text{th}}$ ($l \geq 0$) non-central moment of $x$, denoted by $E_l^x$, is given by \cite{horrace2015moments}
\begin{equation}
E_l^x = \mathbb{E}\left\lbrace x^l \right\rbrace = \sum_{i = 0}^{l}\binom{l}{i}\sigma^i\mu^{l-i}L_{i}.
\end{equation}
%Let
%\begin{equation}
%\alpha = \frac{a-\mu}{\sigma}, \beta = \frac{b-\mu}{\sigma}, Z = \Phi(\beta) - \Phi(\alpha),
%\end{equation}
%then we have
%\begin{equation} \label{eq:Exl}
%E_l^x = \sum_{i = 0}^{l}\binom{l}{i}\sigma^i\mu^{l-i}L_{i},
%\end{equation}
%where $L_i$ satisfies the recursion that
%\begin{align}
%& L_0 = 1, \notag \\
%& L_1 = -\dfrac{\phi(\beta) - \phi(\alpha)}{Z}, \notag \\
%& L_i = -\dfrac{\beta^{i-1}\phi(\beta) - \alpha^{i-1}\phi(\alpha)}{Z} + (i-1)L_{i-2} \: .
%\end{align}
We can now calculate the $l^{\text{th}}$ ($l \geq 0$) non-central moment of a truncated Gaussian distributed variable, such as 
%Particularly, for the Tx side, we have $E_1^t = \mathbb{E}\left\lbrace A_{bt,i} \right\rbrace = \alpha_{t}$ as given in \eqref{eq:alphat}, $E_2^t = \mathbb{E}\left\lbrace A_{bt,i}^2 \right\rbrace = A_{t}$ as given in \eqref{eq:EAtAr}, and
\begin{align} \label{eq:E4t}
\!\!\!\! E_4^t & = \mathbb{E}\left\lbrace A_{bt,i}^4 \right\rbrace \notag \\
& = \mu^4 \!+\! 4\sigma\mu^3\!\left(\! \frac{\phi(\beta)\! -\! \phi(\alpha)}{Z}\! \right) \!+\! 6\sigma^2\mu^2\!\left(\!1\!+\! \frac{\alpha\phi(\alpha)\! -\!\beta\phi(\beta)}{Z}\!\right) \notag \\
& \qquad + 4\sigma^3\mu\left(\frac{(\alpha^2+2)\phi(\alpha) -(\beta^2+2)\phi(\beta)}{Z}\right) \notag \\
& \qquad + \sigma^4\left(3+\frac{(\alpha^3+3\alpha)\phi(\alpha) \!- \!(\beta^3+3\beta)\phi(\beta)}{Z}\right).
\end{align}
To calculate the expected value of the inverse square of the  truncated Gaussian distributed variable $x$, denoted by $E_{\bar{2}}^x$ where the subscript $(\cdot)_{\bar{2}}$ is used for the 2-nd inverse moment of $x$, we have
\begin{align} \label{eq:E-2}
E_{\bar{2}}^x & = \mathbb{E}\left\lbrace \frac{1}{x^2} \right\rbrace = \int_{a}^{b} x^{-2}\dfrac{1}{\sigma Z}\phi\left( \frac{x-\mu}{\sigma}\right)\, \mathrm{d}x \notag \\
& = \dfrac{1}{\sqrt{2\pi}\sigma Z} \int_{a}^{b} x^{-2}\textnormal{exp}\left( -\frac{1}{2}\left( \frac{x-\mu}{\sigma}\right)^{2}\right)\, \mathrm{d}x.
\end{align} 
The integral in \eqref{eq:E-2} is a nonelementary antiderivative and not able to be further simplified when $\mu \neq 0$. 
%In addition, the lower bound of $\mathbb{E}\left\lbrace \frac{1}{x^2} \right\rbrace$ obtained by using the Jensen's Inequality, i.e., $\mathbb{E}\left\lbrace \frac{1}{x^2} \right\rbrace \geq \frac{1}{\mathbb{E}\left\lbrace x^2 \right\rbrace}$, is not a tight bound, which can be easily proved since that the series expansion, e.g., Taylor expansion, of $\mathbb{E}\left\lbrace \frac{1}{x^2} \right\rbrace$ is divergent.
Consider the particular case in this work, 
%for example, in order to obtain $\lambda_{\textnormal{mrt}}^{\textnormal{IC}}$ in Section~\ref{sec:performance}, the reciprocity-error-related parameter $E_{\bar{2}}^r = \mathbb{E}\left\lbrace \frac{1}{A_{br,i}^2}\right\rbrace $ is required. \textit{There exists no exact mathematical expression of $E_{\bar{2}}^r$ in terms of elementary functions}, as proved following \eqref{eq:E-2}. However, as mentioned before, 
the value of $E_{\bar{2}}^r$ can be measured based on the measurement of the response of the BS RF frontends, i.e., $\mathbf{R}_{\textnormal{self}}$ and $\mathbf{R}_{\textnormal{half}}$, such that
\begin{equation} \label{eq:E-2r}
E_{\bar{2}}^r = \frac{1}{M}\textnormal{tr}\left(\mathbf{R}_{\textnormal{half}}^{-1}(\mathbf{R}_{\textnormal{half}}^{\ast})^{-1} \right).
\end{equation}
Similarly, we have
\begin{align}
%\label{eq:E-2t} E_{\bar{2}}^t = \frac{1}{M}\textnormal{tr}\left((\mathbf{R}_{\textnormal{SC}}/\mathbf{R}_{\textnormal{HC}})^{-1}(\mathbf{R}_{\textnormal{SC}}/\mathbf{R}_{\textnormal{HC}})^{-\ast} \right).
\label{eq:E-2t} E_{\bar{2}}^t = \frac{1}{M}\textnormal{tr}\left(\mathbf{R}_{\textnormal{self}}^{2}(\mathbf{R}_{\textnormal{half}}\mathbf{R}_{\textnormal{half}}^{\ast})^{-1} \right).
\end{align}
% you can choose not to have a title for an appendix
% if you want by leaving the argument blank

\section{} \label{sec:AppB}
\subsection{Inverse Calibration}
\begin{enumerate}[label=(\arabic*), wide=0pt]
\item {$\lambda_{\textnormal{mrt}}^{\textnormal{IC}}$:} Recall \eqref{eq:lamda} and \eqref{eq:WmrtIC}, the denominator inside of the square root in $\lambda_{\textnormal{mrt}}^{\textnormal{RC}}$ can be given by
\begin{align} 
\mathbb{E}\left\lbrace \textnormal{tr}\left( \mathbf{W}_{\textnormal{mrt}}^{\textnormal{IC}} (\mathbf{W}_{\textnormal{mrt}}^{\textnormal{IC}})^{H} \right)  \right\rbrace & \overset{\textnormal{(a)}}{=} {\mathbb E} \left\lbrace  \textnormal{tr}\left(a^2\mathbf{H}_{bt}^{-1}\mathbf{H}^{\ast}\mathbf{H}^{T}(\mathbf{H}_{bt}^{\ast})^{-1}\right) \right\rbrace \notag \\
& \!\!\!\!\!\!\!\!\!\!\!\!\!\!\!\!\!\!\!\!\!\!\!\!\!\!\!\!\!\!\!\!\!\!\!\!\!\!\!\!\!\! + {\mathbb E} \left\lbrace  \textnormal{tr}\left(b^2(\mathbf{H}_{bt}\mathbf{H}_{br}^{\ast})^{-1}\mathbf{N}_{u}^{\ast}\mathbf{N}_{u}^{T}(\mathbf{H}_{bt}^{\ast}\mathbf{H}_{br})^{-1}\right) \right\rbrace \\
%%& \overset{\textnormal{(a)}}{=} {\mathbb E} \left\lbrace \sum_{i}^{M}\lvert h_{br,i} \rvert^{2}\sum_{j}^{K}\lvert h_{i,j} \rvert^{2} \right\rbrace \\
%%& \overset{\textnormal{(b)}}{=} \sum_{i}^{M}\sum_{j}^{K} {\mathbb E} \left\lbrace \lvert h_{br,i} \rvert^{2}\lvert h_{i,j} \rvert^{2} \right\rbrace \\
& \!\!\!\!\!\!\!\!\!\!\!\!\!\!\!\!\!\!\!\!\!\!\!\! = MKE_{\bar{2}}^t\left( a^2+b^2E_{\bar{2}}^r \right),
\end{align}
where $\textnormal{(a)}$ is conditioned on the independence between $\mathbf{H}$ and $\mathbf{N}_{u}$. Thus we have
%where $\textnormal{(a)}$ is conditioned on the independence between $\mathbf{H}$ and $\mathbf{N}_{u}$, while the estimation-error-related parameters $a$ and $b$ are given in \eqref{eq:esterr}, and the reciprocity-error-related parameters $E_{-2}^r$ and $E_{-2}^t$ are calculated by \eqref{eq:E-2r} and \eqref{eq:E-2t} respectively. Thus we have
\begin{equation} \label{eq:lamdamrtIC}	\lambda_{\textnormal{mrt}}^{\textnormal{IC}} = \sqrt{\dfrac{1}{MKE_{\bar{2}}^t\left( a^2+b^2E_{\bar{2}}^r \right)}}.
\end{equation}

\item {${\mathbb E}\left\lbrace P_{s,\text{mrt}}^{\text{IC}}\right\rbrace$:} Recall \eqref{eq:Ps} and $\mathbb{E}\left\lbrace \lvert s_k \rvert^2 \right\rbrace = 1$, we have
\begin{equation} \label{eq:EPsmrtIC1}
{\mathbb E}\left\lbrace P_{s,\text{mrt}}^{\text{IC}}\right\rbrace = {\mathbb E} \left\lbrace \lvert\sqrt{\rho_{d}}\lambda_{\textnormal{mrt}}^{\textnormal{IC}}\mathbf{h}_{k}^{T}\mathbf{H}_{bt}\mathbf{w}_{k,\textnormal{mrt}}^{\textnormal{IC}}\rvert^{2} \right\rbrace,
\end{equation}
where
\begin{align}
{\mathbb E} \left\lbrace \lvert\mathbf{h}_{k}^{T}\mathbf{H}_{bt}\mathbf{w}_{k,\textnormal{mrt}}^{\textnormal{IC}}\rvert^{2} \right\rbrace & \notag \\
& \label{eq:EhwmrtIC1}\!\!\!\!\!\!\!\!\!\!\!\!\!\!\!\!\!\!\!\!\!\!\!\!\!\!\!\!\!\!\!\!\!\!\!\!\!\!\!= {\mathbb E} \left\lbrace \lvert\mathbf{h}_{k}^{T}\mathbf{H}_{bt}(a\mathbf{H}_{bt}^{-1}\mathbf{h}_{k}^{\ast}+b(\mathbf{H}_{bt}\mathbf{H}_{br}^{\ast})^{-1}\mathbf{n}_{u,k}^{\ast})\rvert^{2} \right\rbrace \\
& \label{eq:EhwmrtIC2}\!\!\!\!\!\!\!\!\!\!\!\!\!\!\!\!\!\!\!\!\!\!\!\!\!\!\!\!\!\!\!\!\!\!\!\!\!\!\!= a^2{\mathbb E}\left\lbrace \lvert\mathbf{h}_{k}^{T}\mathbf{h}_{k}^{\ast}\rvert^{2}\right\rbrace + b^2{\mathbb E}\left\lbrace\lvert\mathbf{h}_{k}^{T}(\mathbf{H}_{br}^{\ast})^{-1}\mathbf{n}_{u,k}^{\ast}\rvert^{2}\right\rbrace \\
& \label{eq:EhwmrtIC3}\!\!\!\!\!\!\!\!\!\!\!\!\!\!\!\!\!\!\!\!\!\!\!\!\!\!\!\!\!\!\!\!\!\!\!\!\!\!\!= a^2M(M+1) + b^2ME_{\bar{2}}^r.
\end{align}
Substituting \eqref{eq:lamdamrtIC} and \eqref{eq:EhwmrtIC3} into \eqref{eq:EPsmrtIC1}, we have
\begin{equation} \label{eq:EPsmrtIC}
{\mathbb E}\left\lbrace P_{s,\text{mrt}}^{\text{IC}}\right\rbrace = \dfrac{\rho_d\left( a^2(M+1)+b^2E_{\bar{2}}^r\right) }{KE_{\bar{2}}^t\left( a^2+b^2E_{\bar{2}}^r \right)}.
\end{equation}

\item {${\mathbb E}\left\lbrace1/\left( P_{I,\text{mrt}}^{\text{IC}} + \sigma_{k}^{2}\right) \right\rbrace$:} Consider that the symbols of different users are independent, we first calculate ${\mathbb E}\left\lbrace P_{I,\text{mrt}}^{\text{IC}}\right\rbrace$ as
\begin{align}
{\mathbb E}\left\lbrace P_{I,\text{mrt}}^{\text{IC}}\right\rbrace & = {\mathbb E} \left\lbrace \left|\sum_{i = 1, i \neq k}^{K}\sqrt{\rho_{d}}\lambda_{\textnormal{mrt}}^{\textnormal{IC}}\mathbf{h}_{k}^{T}\mathbf{H}_{bt}\mathbf{w}_{i,\textnormal{mrt}}^{\textnormal{IC}}\right|^{2} \right\rbrace \\
& = \sum_{i = 1, i \neq k}^{K} {\mathbb E} \left\lbrace \left|\sqrt{\rho_{d}}\lambda_{\textnormal{mrt}}^{\textnormal{IC}}\mathbf{h}_{k}^{T}\mathbf{H}_{bt}\mathbf{w}_{i,\textnormal{mrt}}^{\textnormal{IC}}\right|^{2} \right\rbrace \\
& \label{eq:EPImrtIC}\overset{\textnormal{(b)}}{=} \frac{\rho_d(K-1)}{KE_{\bar{2}}^t},
\end{align}
where $\textnormal{(b)}$ can be carried out in the similar way as that from \eqref{eq:EPsmrtIC1} to \eqref{eq:EPsmrtIC}. We then calculate $\textnormal{var}(P_{I,\text{mrt}}^{\text{IC}} + \sigma_{k}^{2})$ by following the technique as in \cite[Theorem~1]{ReciprocityErrorTCOM2016}, such that
\begin{equation} \label{eq:varmrtIC}
\textnormal{var}\left(P_{I,\text{mrt}}^{\text{IC}} + \sigma_{k}^{2}\right) = \frac{\rho_d^2(K-1)}{\left( KE_{\bar{2}}^t\right)^2 }.
\end{equation}
Substituting \eqref{eq:EPImrtIC} and \eqref{eq:varmrtIC} into \eqref{eq:EPi}, we have
\begin{equation}\label{eq:EPINmrtIC}
{\mathbb E}\left\lbrace\frac{1}{ P_{I,\text{mrt}}^{\text{IC}}\! +\! \sigma_{k}^{2}}\right\rbrace \!= \!\dfrac{KE_{\bar{2}}^t\left( (KE_{\bar{2}}^t)^2 \!+\! \rho_{d}K(K\!-\!1)(\rho_{d} \!+\! 2E_{\bar{2}}^t)\right) }{(\rho_{d}(K-1) + KE_{\bar{2}}^t)^3} .
\end{equation}

Now we can arrive at \eqref{eq:RKMRTIC} in Proposition~\ref{pRIC} by substituting \eqref{eq:EPsmrtIC} and \eqref{eq:EPINmrtIC} in \eqref{eq:RK2}.
\end{enumerate}

\subsection{Relative Calibration}
For the sake of simplicity, we list the main results for the relative calibration scheme as below. The derivation of these results follows the similar technique as in the previous subsection.
\begin{equation} \label{eq:lamdamrtRC}	\lambda_{\textnormal{mrt}}^{\textnormal{RC}} = \sqrt{\dfrac{1}{MKA_t\left(a^2+b^2E_{\bar{2}}^r\right)}},
\end{equation}
\begin{equation} \label{eq:EPsmrtRC}
{\mathbb E}\left\lbrace P_{s,\text{mrt}}^{\text{RC}}\right\rbrace = \dfrac{\rho_d\left( a^2((M\!-\!1)A_t^2+2E_4^t) \!+\!b^2E_{4}^tE_{\bar{2}}^r\right) }{KA_t\left( a^2+b^2E_{\bar{2}}^r \right)},
\end{equation}
\begin{align}
&\!\!\!\!\!\! {\mathbb E}\left\lbrace\frac{1}{ P_{I,\text{mrt}}^{\text{RC}} + \sigma_{k}^{2}}\right\rbrace \notag \\
&\label{eq:EPINmrtRC} =\dfrac{KA_t\left( K^2A_t^2 + \rho_{d}K(K-1)E_4^t(\rho_{d}E_4^t + 2A_t)\right) }{(\rho_{d}(K-1)E_4^t + KA_t)^3}.
\end{align}
Substituting \eqref{eq:EPsmrtRC} and \eqref{eq:EPINmrtRC} in \eqref{eq:RK2}, we have \eqref{eq:RKMRTRC} in Proposition~\ref{pRRC}.

% use section* for acknowledgement
% \section*{Acknowledgment}

% The authors would like to thank...

% Can use something like this to put references on a page
% by themselves when using endfloat and the captionsoff option.
%\ifCLASSOPTIONcaptionsoff
%  \newpage
%\fi

% trigger a \newpage just before the given reference
% number - used to balance the columns on the last page
% adjust value as needed - may need to be readjusted if
% the document is modified later
%\IEEEtriggeratref{8}
% The "triggered" command can be changed if desired:
%\IEEEtriggercmd{\enlargethispage{-5in}}

% references section

% can use a bibliography generated by BibTeX as a .bbl file
% BibTeX documentation can be easily obtained at:
% http://www.ctan.org/tex-archive/biblio/bibtex/contrib/doc/
% The IEEEtran BibTeX style support page is at:
% http://www.michaelshell.org/tex/ieeetran/bibtex/
%\bibliographystyle{IEEEtran}
% argument is your BibTeX string definitions and bibliography database(s)
%\bibliography{IEEEabrv,../bib/paper}
%
% <OR> manually copy in the resultant .bbl file
% set second argument of \begin to the number of references
% (used to reserve space for the reference number labels box)
\bibliographystyle{IEEEtran}
% argument is your BibTeX string definitions and bibliography database(s)
\bibliography{IEEEabrv,DMRef}

% biography section
% 
% If you have an EPS/PDF photo (graphicx package needed) extra braces are
% needed around the contents of the optional argument to biography to prevent
% the LaTeX parser from getting confused when it sees the complicated
% \includegraphics command within an optional argument. (You could create
% your own custom macro containing the \includegraphics command to make things
% simpler here.)
%\begin{IEEEbiography}[{\includegraphics[width=1in,height=1.25in,clip,keepaspectratio]{mshell}}]{Michael Shell}
% or if you just want to reserve a space for a photo:

%\begin{IEEEbiography}{Michael Shell}
%Biography text here.
%\end{IEEEbiography}

% if you will not have a photo at all:
%\begin{IEEEbiographynophoto}{John Doe}
%Biography text here.
%\end{IEEEbiographynophoto}

% insert where needed to balance the two columns on the last page with
% biographies
%\newpage

%\begin{IEEEbiographynophoto}{Jane Doe}
%Biography text here.
%\end{IEEEbiographynophoto}

% You can push biographies down or up by placing
% a \vfill before or after them. The appropriate
% use of \vfill depends on what kind of text is
% on the last page and whether or not the columns
% are being equalized.

%\vfill

% Can be used to pull up biographies so that the bottom of the last one
% is flush with the other column.
%\enlargethispage{-5in}

% that's all folks
\end{document}